\def\ps@pprintTitle{%
 \let\@oddhead\@empty
 \let\@evenhead\@empty
 \def\@oddfoot{}%
 \let\@evenfoot\@oddfoot}
\newtheorem{proof}{Proof}
\newtheorem{proposition}{Proposition}
\journal{Journal of Theoretical Biology}
\title{Modeling Contact Tracing in Outbreaks with Application to Ebola}
\begin{document}

\begin{frontmatter}

%% Title, authors and addresses

%% use the tnoteref command within \title for footnotes;
%% use the tnotetext command for theassociated footnote;
%% use the fnref command within \author or \address for footnotes;
%% use the fntext command for theassociated footnote;
%% use the corref command within \author for corresponding author footnotes;
%% use the cortext command for theassociated footnote;
%% use the ead command for the email address,
%% and the form \ead[url] for the home page:
%% \title{Title\tnoteref{label1}}
%% \tnotetext[label1]{}
%% \author{Name\corref{cor1}\fnref{label2}}
%% \ead{email address}
%% \ead[url]{home page}
%% \fntext[label2]{}
%% \cortext[cor1]{}
%% \address{Address\fnref{label3}}
%% \fntext[label3]{}

\title{Modeling Contact Tracing in Outbreaks with Application to Ebola}

%% use optional labels to link authors explicitly to addresses:
%% \author[label1,label2]{}
%% \address[label1]{}
%% \address[label2]{}

\author[Vandy]{Cameron Browne\corref{corAut}}
\author[GT,GTm]{Hayriye Gulbudak}
\author[Vandy]{Glenn Webb}
\address[Vandy]{Department of Mathematics, Vanderbilt University}
\address[GT]{School of Biology, Georgia Institute of Technology}
\address[GTm]{School of Mathematics, Georgia Institute of Technology}
\cortext[corAut]{corresponding author: cambrowne@louisiana.edu}

%\address{}

\begin{abstract}
%% Text of abstract
Contact tracing is an important control strategy for containing Ebola epidemics.   From a theoretical perspective, explicitly incorporating contact tracing with disease dynamics presents challenges, and population level effects of contact tracing are difficult to determine.  In this work, we formulate and analyze a mechanistic SEIR type outbreak model which considers the key features of contact tracing, and we characterize the impact of contact tracing on the effective reproduction number, $\mathcal R_e$, of Ebola.  In particular, we determine how relevant epidemiological properties such as incubation period, infectious period and case reporting, along with varying monitoring protocols, affect the efficacy of contact tracing.  In the special cases of either perfect monitoring of traced cases or perfect reporting of all cases, we derive simple formulae for the critical proportion of contacts that need to be traced in order to bring the effective reproduction number $\mathcal R_e$ below one.  Also, in either case, we show that $\mathcal R_e$ can be expressed completely in terms of \emph{observable reported case/tracing quantities}, namely $\mathcal R_e=k\dfrac{(1-q)}{q}+k_m$ where $k$ is the number of secondary traced infected contacts per primary \emph{untraced} reported case,  $k_m$ is the number of secondary traced infected contacts per primary \emph{traced} reported case and $(1-q)/q$ is the odds that a reported case is not a traced contact.  These formulae quantify contact tracing as both an intervention strategy that impacts disease spread and a probe into the current epidemic status at the population level.  Data from the West Africa Ebola outbreak is utilized to form real-time estimates of $\mathcal R_e$, and inform our projections of the impact of contact tracing, and other control measures, on the epidemic trajectory.  
\end{abstract}

\begin{keyword}
Mathematical modeling \sep Epidemiology \sep contact tracing \sep Ebola
%% keywords here, in the form: keyword \sep keyword

%% PACS codes here, in the form: \PACS code \sep code

%% MSC codes here, in the form: \MSC code \sep code
%% or \MSC[2008] code \sep code (2000 is the default)

\end{keyword}

\end{frontmatter}

\section{Introduction}

  Contact tracing has recently gained public attention because of its importance as a control strategy in the 2014-2015 Ebola outbreaks.   Contact tracing is a potentially powerful disease control strategy in which the close contacts of reported/isolated cases are traced and monitored so that if they become symptomatic they can be efficiently isolated and, in turn, cause reduced transmissions. There are numerous studies about contact tracing (also known as active case finding) over the past two decades.  However, from a theoretical point of view, explicit modeling of the contact tracing process has been challenging due to the complexities in reconciling forward dynamics of an epidemic with the action of tracers working through a transmission chain, along with accounting for particular disease characteristics and public health capabilities.  Thus, formulation of mechanistic models of contact tracing that are analytically tractable and epidemiologically relevant is an important problem in epidemiology. 
  
 Because Ebola is typically transmitted through direct contact with bodily fluids of a symptomatic individual and the incubation period is relatively long (average around $11$ days), contacts that have been exposed to Ebola virus can be identified, monitored, and, when symptomatic, be isolated to limit spread \cite{Team, Fraser}.  Previous outbreaks have been rapidly controlled with contact tracing and isolation, along with limiting hospital and funeral transmission \cite{faye2015chains}.  However, the failure of initial containment and the subsequent unprecedented scale of the West Africa Ebola outbreak in 2014-2015 have challenged public health authorities to employ effective control measures.  As of June 2015, there have been over 27,000 reported cases in the outbreak, however the incidence has decreased markedly over the past six months due to enhanced control efforts \cite{team2015west}.  Therefore it is vital to evaluate how different interventions, such as contact tracing, have affected the epidemic and can bring the outbreak to an end.    Direct measurement of the impact of contact tracing on an outbreak has recently been considered in relation to tuberculosis \cite{guzzetta2015effectiveness}, but a general modeling framework for measuring the population-level effect of contact tracing with epidemic data is needed for Ebola and other emerging pathogens.  
 
 Previous works about contact tracing have used a range of modeling methods from individual based models on specific networks to compartmental ordinary differential equations at the population level.  While networks are a natural setting for contact tracing \cite{Kiss}, their connection with traditional epidemic compartmental models (through a mean-field approximation) should be emphasized.  Therefore it is desirable to have a modeling framework which can work in both settings with sufficient detail, mechanism and simplicity. Also, branching process models \cite{Ball, Muller,Klinkenberg} have been utilized for explicit contact tracing structure, but lack both the simplicity and compartmental detail inherent in the differential equation model we formulate in this article.   Many differential equation models have incorporated contact tracing implicitly \cite{Rivers, Mubayi}, though  Hyman et al. \cite{Hyman} considered a deterministic model with explicit contact tracing for HIV, a different setting than an emerging outbreak such as Ebola.    Mathematical modeling of Ebola has been considered by many authors, e.g. \cite{althaus2014,Team, Chowell,camacho2015,chowell2004,chowell2014west,pmid25685633,Fisman,merler2015spatiotemporal,Kiskowski,kupferschmidt2014estimating,legrand2007understanding,Towers,Weitz}, and several of Ebola modeling studies have implicitly incorporated effects of contact tracing (along with other interventions) \cite{Rivers, Pandey,chowell2015modelling,chowell2004,WebbCT}.  However implicit inclusion of contact tracing fails to capture the true effect of tracing on the effective reproduction number $\mathcal R_e$, a key quantity of interest for epidemiologists.  
 
   In this work, we develop a deterministic outbreak model of contact tracing particularly applicable to Ebola epidemics, which explicitly links tracing back to transmissions, and incorporates disease traits and control together with monitoring protocols.  We calculate the effective reproduction number, $\mathcal R_e$, and simulate the model under different control scenarios.  In particular, we perform sensitivity analysis to determine how disease characteristics, variable monitoring protocols, case reporting and timing of intervention affect the efficacy of contact tracing in controlling the epidemic.  In the special case where either traced contacts are always prevented from causing further infections (perfect monitoring) or 100\% of cases are reported (perfect reporting), we derive simple formulae for the critical proportion of contacts that need to be traced in order to bring $\mathcal R_e$ below one.  
   
   Furthermore, from the explicit linking of tracing and transmission in our model, we derive novel formulae directly relating the effective reproduction number $\mathcal R_e$ to contact tracing observables.  Indeed, define $k$ as the average number of secondary infected contacts traced per primary reported \emph{untraced} case, and $q$ as the fraction of reported cases which are traced contacts.  Then, in the case of \emph{perfect monitoring} of traced contacts and the general setting of unknown underreporting with distinct reported/unreported case infectious periods, we show that the following formula holds:
 \begin{align*}
 \mathcal R_e&= k\left( \frac{1-q}{q} \right)
 \end{align*}
 Thus, $\mathcal R_e$ is simply the product of the average number of secondary infected traced contacts per untraced reported case and the odds that a reported case is not a traced contact.   If contact monitoring is \emph{imperfect}, but case reporting is assumed to be perfect (100\% of cases reported), then $\mathcal R_e$ can also be expressed simply in terms of contact tracing observables.  In this instance, define $k_m$ as the average number of secondary infected contacts traced per primary reported \emph{traced} case, then:
  \begin{align*}
 \mathcal R_e&= k\left( \frac{1-q}{q} \right) + k_m
 \end{align*}
  Utilizing current data and this formula, we form weekly estimates of $\mathcal R_e$ for the Ebola outbreaks in Sierra Leone and Guinea, with the main goal of determining the impact of contact tracing on the epidemics.   Taken together, our model and results quantify contact tracing as both a dynamic intervention strategy impacting disease spread and a probe into the current epidemic status at the population level.

\vspace{.5cm}

\section{Model and Methods}
\subsection{Base SEIR model}
We begin with an SEIR-type base model consisting of compartments representing susceptible ($S$), exposed or incubating ($E$), and two distinct infectious groups, namely infectious individuals which will be hospitalized/reported ($I_h$) and infectious individuals which will not be hospitalized and unreported ($I_u$), along with the decoupled compartment of cumulative hospitalized/reported cases ($H$).  The following differential equation system models the dynamics of these populations:
\begin{align*}
S'(t)&=-\beta S(t) \left(I_h(t)+I_u(t) \right) \\
E'(t)&=\beta S(t) \left(I_h(t)+I_u(t) \right) -\frac{1}{\tau} E(t) \\
I_h'(t)&=\rho\frac{1}{\tau}E(t) -\frac{1}{T_h} I_h(t) \label{B1}\tag{1} \\
I_u'(t)&=(1-\rho)\frac{1}{\tau} E(t) -\frac{1}{T_u} I_u(t) \\
H'(t)&=\frac{1}{T_h} I_h(t) 
 \end{align*}

The parameter $\beta$ represents transmission rate for infectious individuals, $\tau$ is mean incubation period, $\rho$ is fraction of hospitalized cases, $T_h$ is time from infectiousness (symptom) onset until hospitalization/isolation and reporting, and $T_u$ is the mean infectious period for an unhospitalized case.  Note that we consider an infectious case being hospitalized, isolated and reported as the same, whether or not the individual is treated in a hospital or placed in an isolation unit.  We assume that such a case is reported at the time of hospitalization and neglect any possibility of hospital transmission.  While transmission in a health-care setting poses significant risk to workers and visitors during Ebola outbreak, community transmission accounts for the vast majority of cases.  In addition, the main focus of our article is a detailed model of contact tracing, so we elect to make the model simpler by not including hospital transmission.  For this reason, we also neglect another route of transmission particular to Ebola; post-death transmission due to improperly handled deceased \cite{Weitz}, and leave incorporation of these features into contact tracing models for future work.  

 \begin{figure}
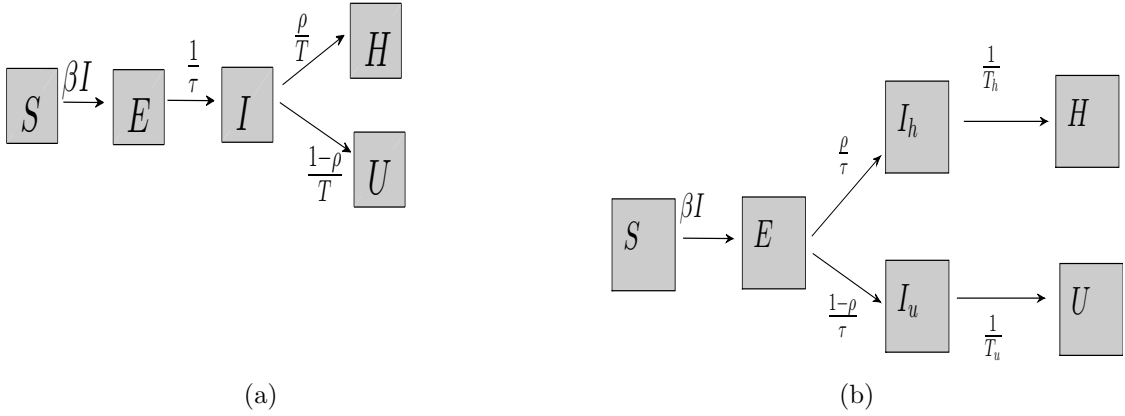
       
 \subfigure[][]{\label{1a} \includegraphics[width=.5\textwidth,height=.35\textwidth]{EbolaDiag_Standart.pdf}}
\subfigure[][]{ \label{1b}\includegraphics[width=.4\textwidth,height=.25\textwidth]{EbolaDiag_Standartb.pdf}}
 \caption{ Base Models: (a)   Single infectious compartment model (equation (\ref{B2}) in \ref{A2}) with infectious period $T$ and proportion hospitalized $\rho$. (b)   Multiple infectious compartments model with variables for those which will be hospitalized, $I_h$, and unhospitalized, $I_u$, corresponding infectious periods $T_h$ and $T_u$, and proportion hospitalized $\rho$ (equation (\ref{B1})).  Informally, in (a), the ``hospitalization fate'' is determined upon exit from infectious class, whereas in (b), the ``hospitalization fate'' is determined upon entry to infectious class.  Note that the model in (a) is a special case of the model in (b).}
 \end{figure}   

The distinction between reported cases and unreported cases is important when considering contact tracing since only reported cases can trigger contact tracing.  We explicitly separate infectious individuals into distinct compartments based on whether they will be reported/hospitalized or unreported.  Note that the distinct infectious compartments of system (\ref{B1}) are in a similar spirit to the model in \cite{lewnard2014dynamics}, where the authors divide the infectious class into individuals who will recover and those who will die.  An alternative base model, the classic $SEIR$ model shown in Figure \ref{1a}, has a single infectious compartment $I$ with mean infectious period $T$ and, again, fraction of hospitalized cases $\rho$.  It is proved in the \ref{A2} that such a model is actually a special case of (\ref{B1}); namely the case where $T_h=T_u= T$ and the initial conditions $I_h(0)=\rho I(0), I_u(0)=(1-\rho)I(0)$.   Thus, the multi-infectious compartment model (\ref{B1}) allows for more generality than the single infectious compartment counterpart, with an additional independent parameter.  Also, it may be unreasonable to assume that infectious time periods $T_h$ and $T_u$ are equal.  Therefore, we utilize the more general base model (\ref{B1}) with multiple infectious compartments, but note that our contact tracing formulation can be incorporated into either base model.

For model (\ref{B1}), we can consider the time until hospitalization, $T_h$, and the fraction of (untraced) hospitalized cases, $\rho$, as parameters which can be together targeted by public health measures (independent of active contact tracing).  The infectious period of unhospitalized cases, $T_u$, can be considered a property of the disease.  In particular, we can take  $T_u=\psi T_d + (1-\psi)T_r$ where $\psi$ is mortality rate (outside of hospital), and $T_d$ and $T_r$ are the times until death and recovery, respectively.  Note that instead of explicitly separating infectious cases that will be unreported into compartments representing death and recovery, we use a single unreported infectious compartment for simplicity and since $T_d\approx T_r$ (see \ref{A2} for justification).    The effective reproduction number, $\mathcal R_e$, for model (\ref{B1}) is given by 
\begin{align*}
\mathcal R_e=\beta S_0 \left(\rho T_h + (1-\rho)T_u \right) = \mathcal R_0 \left( \rho \frac{T_h}{T_u} + 1-\rho \right),  \tag{2} \label{baseRe}
\end{align*}
where $S_0$ is the initial susceptible population size and $\mathcal R_0=\beta S_0T_u$ is defined as the basic reproduction number (without hospitalization of cases).  Standard incidence \cite{de1995does} is often utilized for Ebola models instead of the mass-action transmission assumed here, in which case $\mathcal R_0$ is independent of $S_0$.  We remark that replacing the mass-action (per-capita) contact rate $\beta S$ with the standard incidence rate $\beta\frac{S}{N}$ does not affect the results in this article.  Also, the population size $N$ remains approximately constant over the relevant timescale of the outbreak, thus we can think of $N \approx S_0$ as being absorbed into the parameter $\beta$.

\subsection{Incorporating contact tracing into dynamic models of EVD}

We incorporate contact tracing into the SEIR model framework by deriving a mechanistic link between the retrospective action of contact tracers with forward disease dynamics.  First, we describe the recommended contact tracing process for Ebola.  According to the WHO guidelines for contact tracing in Ebola outbreaks \cite{WHOtracing}, contact tracing consists of three basic elements, namely, contact identification, contact listing and contact follow-up.  The process of contact identification begins upon an infected individual (alive or dead) being reported and isolated.  A public health officer then conducts an epidemiological investigation to determine the contacts of the reported case.  In the next step (contact listing), a team of contact tracers tracks the contacts identified from the first step.   If they are showing symptoms at the time of tracking, then the tracers will call for an ambulance team to remove and isolate the individual.   Otherwise, the tracked contacts are informed of their contact status and told to report any symptoms immediately to a provided telephone number, along with restricting contact with other people.  A surveillance team follows up on the tracked contacts by visiting them regularly during a 21-day (``approximately'' maximum incubation period) window, and calling the ambulance team if symptoms are present.  

In order to model this process, we first make the following assumptions on top of the base model:   
\begin{itemize}
\item The hospitalized/reported cases can trigger contact tracing.
\item Contacts of a reported case are traced with probability $\phi$ and there is a fixed delay $\delta$ between index case reporting and contact tracking.  If a traced contact is infectious at this time, they are immediately removed/isolated; otherwise they follow a monitoring protocol aimed at early detection and isolation if symptoms develop.  
  \end{itemize}
  Note that the probability $1-\phi$ accounts for untraced contacts in terms of, both, reported cases that do not trigger tracing and ``missed'' contacts from reported cases that trigger contact tracing, i.e. $\phi=\mu\xi$ where $\mu$ is fraction of reported cases that trigger contact tracing and $\xi$ is the average fraction of total contacts from a reported triggering case that are traced.  Another way of expressing $\phi$ is the following:
$$\phi=\frac{\text{total } \# \ \text{of traced contacts of reported cases}}{\text{total } \# \ \text{of contacts of reported cases}}.$$

   We now formulate the contact tracing terms for incorporation into the base model \ref{B1}.  Let $C_e(t), \ \left(C_i(t)\right)$ be defined as the cumulative number at time $t$ of traced infected individuals who are in exposed (infectious) class when first tracked.  Then
 \begin{align*}
C_e'(t+\delta) &= \hspace{-0.5cm} \underbrace{\frac{1}{T_h} I_h(t)}_{\substack{\text{rate of reporting}\\ \text{of infectious individuals}}} \int\limits_0^{\infty} \underbrace{\frac{1}{T_h}e^{-s/T_h}}_{\substack{\text{reported case}\\ \text{infectious period} \\  \text{probability density}}} \,   \underbrace{\int\limits_{t-s}^t }_{\substack{\text{integral over}\\ \text{ infectious period of } \\  \text{length $s$}}} \,  \underbrace{F_E(t-r+\delta)}_{\substack{\text{probability still}\\ \text{incubating $t-r+\delta$ } \\ \text{units after infection}}}  \underbrace{\phi \beta S(r) \, d r}_{\substack{\text{traced transmissions }\\ \text{infected at}\\ \text{time} \  r\in[t-s,t]}} \,ds, \\ 
C_i'(t+\delta)&=\frac{1}{T_h} I_h(t)\int\limits_0^{\infty} \frac{1}{T_h}e^{-s/T_h} \int\limits_{t-s}^t  \, (1-F_E(t-r+\delta)) \phi \beta S( r) \, d r \, ds, \label{cumCT} \tag{3}
\end{align*}
where the incubation period cumulative distribution $F_E$ can be more general than the assumed exponential distribution.  Since the average infectious period is relatively short, we can assume $S$ is approximately constant over the infectious period, i.e. 
$ \int_0^{\infty} \frac{e^{-s/T_h}}{T_h} \int_{t-s}^t  S( r)  d r  ds \approx T_h S(t).  $
 Then the total number of traced transmissions at time $t+\delta$, $C(t+\delta)$, satisfies
 \begin{align*}
 C'(t+\delta)=C_e'(t+\delta)+C_i'(t+\delta) & \approx \phi \beta S(t) I_h(t)
 \end{align*}

Motivated by these calculations, we define the following additional four compartments:   $E_c^e, \ (E_c^i) $ are exposed individuals who will be traced and incubating (infectious) when tracked; $I_c^e$ are infectious individuals who have been traced during incubation stage and put under monitoring protocol; $I_c^i$ are infectious individuals who will be traced/removed (since symptomatic when first tracked).  The modeling framework for incorporating tracing is to separate the fraction $\phi$ of \emph{reported case} contacts that are traced at the moment of transmission, thus establishing an explicit link between tracing and reported case transmission.  Since the action of tracers upon tracking a contact depends upon whether the traced contact is displaying symptoms, we further separate transmissions which will be traced into those that will be incubating upon tracking ($E_c^e$) and those who will be infectious upon tracking ($E_c^i$).   The probability $p_e$ of a contact traced individual being incubating when tracked should generally depend on the index case infectious period $T_h$, the delay between index case reporting and tracing $\delta$, and the incubation period of the traced case $\tau$.  Indeed $p_e$ can be explicitly calculated from (\ref{cumCT}) as:
\begin{align*}
p_e&=\frac{C_e'(t)}{C_e'(t)+C_i'(t)}= \frac{\tau}{ \tau +T_h}e^{-\delta/\tau}, \tag{4} \label{pe}
\end{align*}  
when $F_E(x)=e^{-x/\tau}$ (exponentially distributed incubation period) and assuming that $S(t)$ is constant over an individual case's infectious period.   For the fraction $(1-p_e)$ of traced individuals who are infectious when tracked, their \emph{average infectious period (prior to being tracked)}, $T_i$, can be approximated by considering the expected value of the time after symptom onset when a traced contact is tracked (given that they have progressed past their incubation period upon tracking), $\hat{T}_i$:
\begin{align*}
\frac{1}{T_i}= \frac{1}{\bar{T}} + \frac{1}{\hat{T}_i} , \ \ \text{where} \ \ \hat{T}_i& =  \frac{1}{T_h} \int_0^{T_h} \left[ T_h+\delta - \left(\int_{r}^{T_h+\delta} \frac{x e^{- (x-r)/\tau}}{\int_0^{T_h-r+\delta} e^{- s/\tau} \, ds} \, dx\right) \right] \, dr  \tag{5} \label{Ti} %\\ \\
%&= T+\delta- \frac{1}{T} \int_0^T \int_{r}^{T+\delta} \frac{x e^{- (x-r)/\tau}}{\int_0^{T-r+\delta} e^{- s/\tau} \, ds} \, dx \, dr \tag{3} \label{Ti}
\end{align*}
with $\bar{T}=\rho T_h +(1-\rho) T_u$ is the (weighted) average infectious period independent of contact tracing.  The above formula (\ref{Ti}) is explained in more detail in the \ref{A1}.   
\begin{figure}
\includegraphics[width=.9\textwidth]{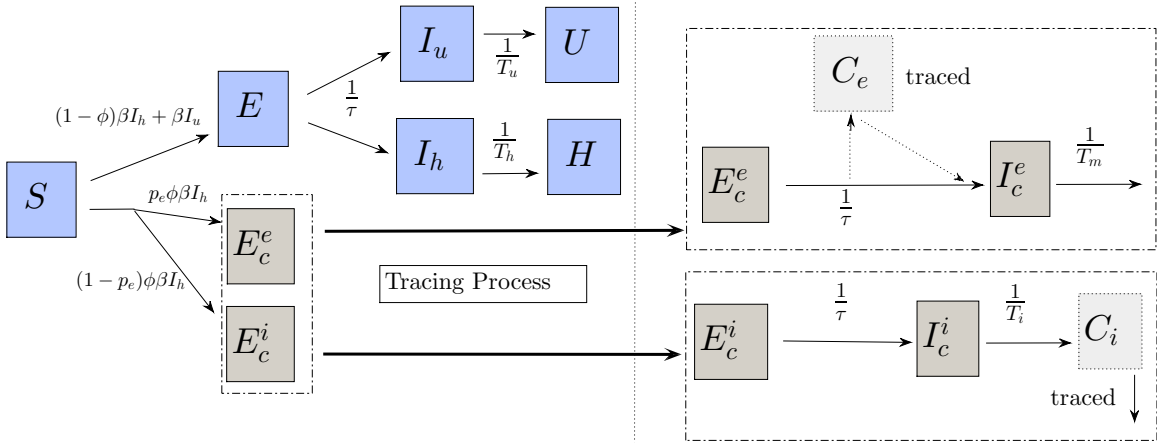}
\caption{\emph{Mechanistic model of contact tracing}:  Compartments dividing transmissions into those that are untraced, $E$ (whom will be unreported $I_u$ or reported $I_h$ and can trigger tracing), and those that are traced (whom will be tracked when incubating $E_c^e$ or infectious $E_c^i$).   The tracing process is depicted with the relevant sequence of events for the traced individuals.  Note that higher order tracing events are not illustrated in the diagram, but are included in the model (\ref{M2.1}).}
\end{figure}
We introduce the following additional assumptions and parameters:
\begin{itemize}
\item The infectious monitored individuals, $I_c^e$, have infectivity $\beta_m$ and mean infectious period (time from symptom onset until hospitalization) $T_m$.  The parameters $T_m$ and $\beta_m$ depend on the strictness and efficacy of the monitoring protocols.  Note that the infectious individuals traced during their infectious period, $I_c^i$, have infectivity $\beta$ prior to tracking and mean infectious period $T_i$.
\item Contact traced infectious individuals, $I_c^e$ and $I_c^i$, trigger contact tracing after being isolated/reported or tracked, which is termed \emph{higher order tracing}.  As with first-order tracing, the fraction $\phi$ of their contacts will be traced.  In particular, we assume that all $I_c^i$ individuals will be tracked, reported and trigger contact tracing, even if they die or recover before being traced. We introduce the parameters $p_e^e$ ($p_e^i$) representing the probability a higher order traced contact will be tracked during their incubation period after being \emph{infected by an $I_c^e$ ($I_c^i)$ individual}.  
\item The probabilities $p_e^e$ and $p_e^i$ can be calculated similar to $p_e$ (\ref{pe}), except using index case infectious periods $T_m$ and $\hat{T}_i$ respectively.   Note that the infectious period for a \emph{higher order tracing} infection caused by a $I_c^i$ individual should be smaller than $T_i$.  But this would require additional compartments, so we choose to approximate the infectious period as $T_i$ and ``close the system'' here. 
\end{itemize}
 
With all of these assumptions, we arrive at the following model:
\begin{align*}
S' \ \ & = \ \ -\beta S(I_h+I_u) - \beta S I_c^i - \beta_m S I_c^e \\
E'\ \ &= \ \ \beta SI_u+ \left(1-\phi \right) \beta S I_h +  (1-\phi) \beta S I_c^i +(1-\phi) \beta_m S I_c^e - \frac{1}{\tau} E \\
(E_c^e)'\ \ &= \ \ p_e\phi \beta S I_h + p_e^i \phi  \beta S I_c^i +p_e^e  \phi \beta_m SI_c^e - \frac{1}{\tau} E_c^e \\
(E_c^i)' \ \ &= \ \ \left(1-p_e\right)\phi \beta S I_h + \left(1-p_e^i \right) \phi  \beta I_c^i S+\left(1-p_e^e\right)  \phi  \beta_m S I_c^e - \frac{1}{\tau} E_c^i \\
I_h'  \ \ &= \ \ \frac{\rho}{\tau} E -\frac{1}{T_h}I_h   \label{M2.1} \tag{6} \\
I_u' \ \ &= \ \ \frac{1-\rho}{\tau} E -\frac{1}{T_u} I_u \\
(I_c^e)' \ \ &= \ \ \frac{1}{\tau} E_c^e -\frac{1}{T_m}I_c^e  \\
(I_c^i)' \ \ &= \ \ \frac{1}{\tau} E_c^i -\frac{1}{T_i}I_c^i 
\end{align*}

The model provides a detailed dynamic description of contact tracing, reconciling the \emph{forward dynamics} of the epidemic with the action of tracers working \emph{backward along a chain of contact}.  The variable and parameter descriptions are listed in Table \ref{Table1}.  Despite the number of compartments, several of the parameters are dependent on basic independent parameters.  Also, efficacy of contact tracing is determined by \emph{both} disease characteristics (through probability $p_e$ and $T_i$ as function of incubation period $\tau$ and infectious periods $T_h$, $T_u$) and monitoring/response efficacy (through delay $\delta$ in calculation of $p_e$, and parameters $\beta_m$ and $T_m$).  

There are limitations and corresponding generalizations of the model for future work.  For example, it is assumed that the incubation and infectious period are exponentially distributed, and infectivity (transmission) is uniformly distributed throughout the infectious period.  These are inherent assumptions of the ordinary differential equation system, however they may over-simplify the true epidemiological distributions.    Most likely these simplifications underestimate the impact of contact tracing.  Indeed, an exponentially distributed incubation period puts too much weight in small incubation times, which decreases the value of $p_e$.  A more appropriate distribution, for instance the gamma or Weibull distribution fit to Ebola patient incubation period data  \cite{Team, Chowell}, will result in higher vales of $p_e$, which increases the traced contacts that are monitored.  
In addition, assuming a uniform infectivity distribution, i.e. constant transmission rate, may not be consistent with the fact that viral load will steadily increase after symptom onset and peak infectivity will occur some time after the infectious period begins.  Thus, early removal of infectious individuals may drastically reduce transmission potential when these individuals are removed before peak infectivity \cite{Webb}.   Fraser et al. have considered contact tracing in an infection-age model with general infectivity distribution, however they did not explicitly link tracing and transmission \cite{Fraser}.  Infection-age structured or stage structured models can include the aforementioned disease features, along with post-death transmission, and incorporating tracing into these types of models is left for future work.  Despite the limitations, our ODE formulation offers an analytically tractable, mechanistic model of contact tracing in disease outbreaks.

\begin{table}[ht!]
\centering 
\begin{tabularx}{\textwidth}{>{} lX}
\toprule
Variable/Parameter & Description\\ [0.5ex]
\toprule
                  \textbf{Variables}       &                                     \\  \hline
     		 $S(t)$ & Susceptible individuals \\ \hline
		  $E(t)$ & Exposed (incubating) infected individuals \\ \hline
		  $I(t)$ & Infectious individuals \\ \hline
		   $E_c^e(t)$ & Exposed individuals who will be traced and incubating when tracked \\ \hline
		   $E_c^i(t)$ & Exposed individuals who will be traced and infectious when tracked \\ \hline
          $I_c^e(t)$ & Monitored infectious individuals (previously contact traced when incubating/not symptomatic).  \\ \hline
           $I_c^i(t)$ & Infectious individuals who will be traced and removed upon tracking (since symptomatic when first tracked).  \\ \hline
           \textbf{Independent parameters}       &                                     \\  \hline
		 $\beta$ & Transmission rate (for unmonitored infectious)   \\     \hline 
		 $ \tau$ & Mean incubation period \\ \hline
		 $ T_h$ & Time until hospitalization/reporting independent of contact tracing \\ \hline
		  $ T_u$ & Mean infectious period for unhospitalized case \\ \hline
		 $ \rho $ & Fraction of untraced cases that are isolated/reported (and can trigger tracing) \\ \hline
		 $ \phi $ & Probability that contact of reported case is traced \\ \hline
		 $ \delta $ & Time delay between case reporting and contact tracing \\ \hline
				$\beta_m$ & Transmission rate for monitored infectious ($I_c^e$) individuals   \\     \hline 
				  $ T_m $ & Mean infectious period for monitored infectious individuals (time between symptom onset and isolation)  \\ \hline
				\textbf{Dependent parameters}       &                                     \\  \hline
				 $ p_e $ & Probability of (first order) traced contact being incubating upon tracking (calculated by formula (\ref{pe}), depends on $\tau$, $T$, $\delta$)  \\ \hline
  $T_i$ & Mean infectious period (prior to being tracked) for contact traced infectious ($I_c^i$) individuals (calculated by formula (\ref{Ti}), depends on $\tau$, $T$, $\delta$)  \\ \hline
  $ p_e^e $ & Probability of (higher order) traced contact (infected by $I_c^e$ individual) being incubating upon tracking (calculated by formula (\ref{pe}) using $\tau$, $T_m$, $\delta$)  \\ \hline
  $ p_e^i $ & Probability of (higher order) traced contact (infected by $I_c^i$ individual) being incubating upon tracking (calculated by formula (\ref{pe}) using $\tau$, $T_i$, $\delta$)  \\ \hline
 
   \bottomrule
   \end{tabularx}
    \caption{Model variables and parameters.}\label{Table1}
   \end{table}

\section{Results}
\subsection{Formulas for $\mathcal R_e$:  control and probing properties of tracing} \label{RuleOfThumb}
In this section, we consider special cases of the general model (\ref{M2.1}) in which explicit formulas for $\mathcal R_e$ are simple and can be expressed completely in terms of contact tracing observables.  First, we consider the special case that $p_e \approx 1$ and $\beta_m=0$, i.e. long incubation period relative to infectious period, fast tracking and perfect monitoring. Although the formula (\ref{pe}) always yields $p_e<1$ for an exponentially distributed infectious period and incubation period, it is reasonable to assume that $p_e=1$ (traced individuals are always tracked during their incubation period) when the incubation period is large relative to infectious period.  In other words, we are assuming that contact tracing is ``perfect'' in the sense that traced individuals do not cause any secondary infections (because of effective monitoring and tracking).  Later in this section, we will relax this assumption and consider imperfect monitoring $\beta_m>0$ in the case of perfect reporting ($\rho=1$).

The effective reproduction number, $\mathcal R_e$, in the case of ``perfect'' tracing ($\beta_m=0$, $p_e=1$) simplifies to the following (calculation in \ref{ReAppendix}):  
\begin{align*}
\mathcal R_e= \mathcal R_0\left(\rho\frac{T_h}{T_u}(1-\phi) + 1-\rho \right), 
\end{align*}
where $\mathcal R_0:=\beta S_0T_u$ is the basic reproduction number, or more precisely the reproductive number in the absence of contact tracing and hospitalization, $\rho$ is the fraction of (untraced) cases which are reported and $\phi$ is the fraction of \emph{reported} case transmissions (contacts) which are traced.  Notice that in the special case of one mean infectious period for all cases, $T_h=T_u=T$,  the formula for $\mathcal R_e$ reduces to the following:
$$\mathcal R_e= \mathcal R_0(1-\rho\phi) $$
Thus, in this case, the contact tracing effort needed to reduce the effective reproduction number, $\mathcal R_e$, below one is: 
\begin{align*}
\mathcal R_e=1 & \Leftrightarrow \left(\rho\phi\right)_c=1-\frac{1}{\mathcal R_0},
\end{align*}
where $\left(\rho\phi\right)_c$ is the critical proportion of \emph{total} cases which need to be traced in order for $\mathcal R_e=1$.  This is the product of the fraction of cases which are reported, $\rho$, and the fraction of contacts (more precisely, transmissions) of reported cases which are traced, $\phi$.  This formula is similar to what was obtained in special cases of two branching process contact tracing models \cite{Muller,Klinkenberg}, but our approach differs from these previous works and is much simpler to derive.  Also, note the similarity with the well-known simple herd immunity vaccination proportion threshold $p_c=1-1/\mathcal R_0$ \cite{Vac}.

The goal now is to directly connect $\mathcal R_e$ with parameters that can be inferred by case and contact tracing data.  Considering the \emph{link between contact tracing and transmission}, we rewrite the fraction $\phi$ of transmissions from a  reported case which will be traced:
$$\phi=\frac{ \# \ \text{of traced contacts per reported case}}{\text{total } \# \ \text{of contacts per reported case}}:=\frac{\ell}{n}, $$
where these quantities are defined in a wholly susceptible population, i.e. the total population size is $N=S_0$.  Now the transmission rate $\beta$ can be written as $\beta=pc$, where $p$ is the probability of transmission per contact and $cS_0$ is the contact rate.  Now for a (untraced) reported case $n = cS_0T_h=\beta S_0T_h/p$. Define the parameter $k$ as the average number of secondary infected traced contacts identified per (untraced) reported case. Then $k:= p \ell = \beta S_0T_h/ \phi$.   Note that the value of the parameter $k$ can be estimated from data and records.  
 
 Another important parameter that can be obtained from case and tracing data is the fraction of reported cases which are traced contacts, denoted by $q$.  We find that $q$ can be expressed as the following constant in terms of model parameters:  
  \begin{align*}
q=\frac{  \phi}{\phi + (1-\phi)\rho  + (1-\rho) T_u/T_h }.  \label{qn} \tag{7}
\end{align*}
Indeed, one can interpret the denominator (multiplied by $\rho$) as the probability a case is reported considering the different potential routes to reporting:  traced; untraced, infection caused by reported case; untraced, infection caused by unreported case.  Also, notice in the case of 100\% reporting ($\rho=1$), $q$ reduces to $q=\phi$.  In addition, from a dynamic perspective, we define the fraction from the cumulative reported cases over some time window $[t-a,t]$ which are traced contacts:
 $$\tilde{q}_a(t):=\frac{\frac{1}{T_m}\int_{t-a}^t I_c(s)\,ds}{\frac{1}{T_m}\int_{t-a}^t I_c(s)\,ds+\frac{1}{T_h}\int_{t-a}^t I_h(s)\,ds}.$$
 In the \ref{ReAppendix}, it is proved that $\tilde{q}_a(t) \approx q$ for sufficiently large $t$, small $a$; and the convergence is quite rapid as illustrated in Figure \ref{Qconv} for the case $a=7 \ days$.  
 
 Combining the formula (\ref{qn}) and $k= \beta S_0 T_h \phi$,
 we find the following formula for the effective reproduction number $\mathcal R_e$:
 \begin{align*}
\mathcal R_e&= k\left( \frac{1-q}{q} \right)  \tag{8} \label{q1}
 \end{align*}
 Thus, $\mathcal R_e$ can be formulated simply as the product of the number of infected traced contacts per (untraced) reported case and the odds that a reported case is not a traced contact in the case of perfect tracing.  Notice that this formula is composed of two quantities that can be determined by case data and contact tracing records.

  \begin{figure}[t]
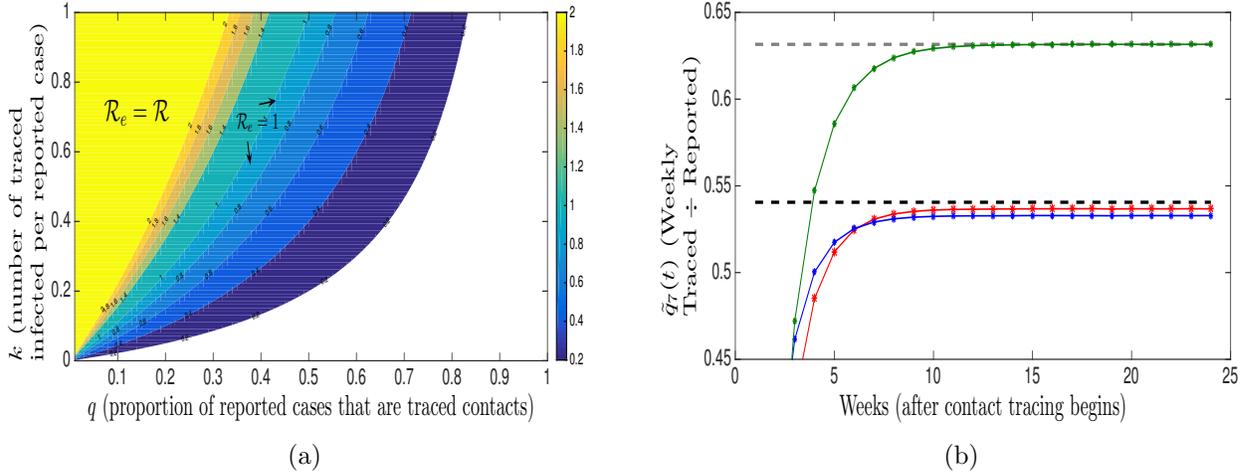
       
 \subfigure[][]{ \label{Contour_kq} \includegraphics[width=.5\textwidth,height=.35\textwidth]{Re_k_q1.pdf}}
 \subfigure[][]{ \label{Qconv} \includegraphics[width=.5\textwidth,height=.35\textwidth]{IllusQ.pdf}}
  \caption{ (a)  Contour map of $\mathcal R_e$ with respect to $k$ and $q$ from formula (\ref{q1}). (b) The weekly cumulative fraction of reported cases that are traced contacts, $\tilde{q}(t)$, converges to approximately (or to exactly) the quantity $q$ given in formula (\ref{qn}). }
  \end{figure}

 The intuition behind the formula for $\mathcal R_e$ is clear in the case where $100 \% $ of the cases are reported.  Indeed, consider a reported case which has caused $k+m$ secondary infections, where $k$ of these are traced (consistent with notation) and $m$ are untraced.  Note that this reported case is necessarily untraced since we are assuming that traced individuals are perfectly prevented from causing further infection.  If this reported case represents an average infectious individual in the population and the level of contact tracing here represents the average in the population, then the effective reproduction number $\mathcal R_e$ should be $m$ since this is the number secondary infections which can cause further transmissions.  Utilizing the formula (\ref{q1}) applied to this transmission chain, we correctly find 
 \begin{align*}
\mathcal R_e&=k\left( \frac{1-q}{q} \right)  =k\left( \frac{m/(k+m)}{k/(k+m)} \right) = m \label{simpCalc} \tag{9}
\end{align*}
 The somewhat surprising result is that formula (\ref{q1}) still holds with an arbitrary proportion of cases reported and distinct infectious periods for reported and unreported cases.  Thus contact tracing observables can theoretically provide an exact measure of $\mathcal R_e$ even in the presence of underreporting ($\rho<1$) and distinct infectious periods ($T_h\leq T_u$).  Our simple explanation utilizing the single index case and single level of transmission chain above will not suffice in the case of $\rho<1$ since we can add unreported cases which would not appear in the calculation  (\ref{simpCalc}), but should appear in $\mathcal R_e$.  However, a simple intuitive line of reasoning is that underreporting will cause cases to not be counted in individual transmission chains, but will also add to the total reported cases since additional reported cases that are transmissions of unreported cases will appear, thus balancing out the effect of underreporting.  A more complete heuristic argument of this result with transmission chains probably requires multiple levels and multiple index cases of an outbreak transmission tree.  Of course, considering explicit transmission trees more in depth leads to thinking about contact tracing on networks, which is a subject in itself.  We leave this as ongoing and future research, and provide further discussion about extending our work to networks in Section \ref{discussion}.

 In order to quantify the impact of control measures on reduction in reproduction number, we can calculate the relative change in reproduction number after applying the intervention.  Define $\mathcal R_e^o$ as the effective reproduction number when there is only case reporting/hospitalization independent of contact tracing (e.g. passive case finding \cite{whitty2014}) and no contact tracing.  Thus $\mathcal R_e^o= \mathcal R_0\left(\rho\frac{T_h}{T_u} + 1-\rho \right)$.  The relative changes in reproduction number with respect to the distinct interventions, passive case finding and contact tracing, are as follows:
\begin{align*}
\text{relative change in } \mathcal R_e \ \text{with tracing} \ &=\frac{\mathcal R_e^o-\mathcal R_e}{\mathcal R_e^o}= \frac{\rho q}{1-q(1-\rho)}, \\
\text{relative change in } \mathcal R_0 \ \text{with only passive case finding} \ &=\frac{\mathcal R_0-\mathcal R_e^o}{\mathcal R_0}= \rho\left(1-\frac{T_h}{T_u}\right), \tag{10} \label{RelChange} \\
 \text{relative change in } \mathcal R_0 \ \text{with passive case finding \& tracing} \ &=\frac{\mathcal R_0-\mathcal R_e}{\mathcal R_0}= \rho\left[\frac{1-\frac{T_h}{T_u}\left(1-q\right)}{1-q(1-\rho)}\right]
\end{align*}
With these formulas, we can quantify the impact of contact tracing and compare with the strategy of only passive case finding.  Notice that the particular role of contact tracing in reducing the effective reproduction number increases with $q$.  Thus $q$ provides a simple observable measure of the impact of tracing, however the relative change in $\mathcal R_e$ also depends on fraction of cases reported independent of tracing, $\rho$.  

Next, we consider another special case, in particular the assumption of perfect tracing is relaxed.  Suppose that there is perfect reporting and tracking, but imperfect monitoring, i.e. $\rho=p_e=1$, $\beta_m>0$.  In this case of 100\% reporting, the reproduction number in the absence of contact tracing is simply $\beta T_h$, which we denote by $\mathcal R:=\beta S_0 T_h$.  In addition, define the reproduction number of contact traced (monitored) individuals as $\mathcal R_m=\beta_m S_0 T_m$.  Then $\theta=\mathcal R_m/\mathcal R$ represents the reduction in secondary cases of a traced individual compared to an untraced individual.  As in the previous case, $\mathcal R_e$ reduces to a simple formula:
\begin{align*}
\mathcal R_e=(1-\phi) \mathcal R +\phi \mathcal R_m
\end{align*}
Thus, in this instance, the critical proportion of \emph{total} cases which need to be traced in order to bring $\mathcal R_e$ below one is:
\begin{align*}
\mathcal R_e=1 & \Leftrightarrow \phi_c=(1-\theta) \left(1-\frac{1}{\mathcal R}\right).
\end{align*}

In order to describe $\mathcal R_e$ in terms of contact tracing observables for this case, define $k=\phi \mathcal R$ and $k_m=\phi \mathcal R_m$ to be the average number of traced infected secondary cases caused per primary reported untraced infected and traced infected, respectively.  In the case of 100\% reporting (and $p_e=1$), $\rho=1$, it is not hard to show that the fraction of reported cases which are traced contacts, $q$, is simply $q=\phi$.  Thus, we find that
 \begin{align*}
\mathcal R_e&= k\left( \frac{1-q}{q} \right)  + k_m \label{ReImperfect} \tag{12}
 \end{align*}
 The impact of tracing in this case can simply be quantified, since the effective reproduction number without tracing is $\mathcal R_e^o=\mathcal R $.  The relative change in $\mathcal R_e$ with tracing is: 
 \begin{align*}
\frac{\mathcal R_e^o-\mathcal R_e}{\mathcal R_e^o}= q(1-\theta) \label{RelChange2} \tag{13}
\end{align*}
Together formulas (\ref{ReImperfect}) and (\ref{RelChange2}) quantify both $\mathcal R_e$ and the particular impact of contact tracing on $\mathcal R_e$ for a given tracing/monitoring efficacy (in terms of secondary case reduction $\theta$ due to tracing) in the case of 100\% reporting.

\subsection{Weekly point estimates of $\mathcal R_e$ for Sierra Leone and Guinea}
The results in the previous section suggest that the effective reproduction number, $\mathcal R_e$, can be estimated directly from contact tracing and reported case data via simple formulas derived from special cases of our mechanistic model of contact tracing.  In this section, we will apply formulas (\ref{q1}) and (\ref{ReImperfect}) to recent contact tracing and reported case data from the Ebola outbreak in Sierra Leone and Guinea in order to obtain some weekly point estimates of $\mathcal R_e$ for these two countries under the assumption of the special cases discussed in the previous section, and in particular different tracing efficacies.  Then, we calculate the effective reproduction number without contact tracing, $\mathcal R_e^o$, in order to gauge the potential impact of contact tracing on the epidemic.  %Do we want to consider case of imperfect monitoring $\beta_m>0$, but $p_e=0$ and $\rho=1$??
First, we briefly compare some other methods of estimating the effective reproduction number in outbreaks from the literature.

A common method for estimating the reproduction number is to fit a deterministic model to incidence data in order to infer unknown parameters in the underlying model and then calculate the reproduction number from standard techniques (e.g. next-generation method) \cite{riley2003, bjornstad2002, cintron2009, Rivers, Pandey}.  Drawbacks to this approach are the lack of generality, context-specific assumptions,  potential identifiability issues along with the fact that parameters vary throughout the epidemic \cite{cori2013, Weitz}.  More recent statistical approaches based on the work of Wallinga and Teunis \cite{wallinga2004} utilize case incidence data and knowledge of the distribution of the serial interval to estimate time-varying values of $\mathcal R_e$ throughout an epidemic \cite{Cauchemez2006, cori2013}.  These statistical methods are more generic in the sense that the underlying model is an infection network, however their implementation can be complex and requires prior estimates of serial interval distribution.  

We propose a novel method for estimating up to date values of $\mathcal R_e$ based on contact tracing and reported case observables by utilizing formulas (\ref{q1}) or (\ref{ReImperfect}).  In particular, with reported case data and contact tracing records detailing, both, the traced infected individuals and their infector, we can directly utilize these formulas for calculation of $\mathcal R_e$ (which hold under the different simplifying assumptions of the previous section).  Indeed, suppose we have the reported case count at a particular time point $t$ (cumulative reported cases during time window $t-a$ to $t$), information about whether these cases are known traced contacts, along with records discerning the number of traced infected contacts which are derived from these reported cases.  Then the number of traced infected identified per untraced and traced reported case, $k$ and $k_m$, can be calculated at this time $t$.  Furthermore, the fraction of reported cases which are traced contacts, $q$, can be calculated for the subsequent time windows of reporting for these traced contacts derived from the infecting reported cases at time $t$.  Then, the formula (\ref{ReImperfect}), $\mathcal R_e=k(1-q)/q \ + k_m$, can be utilized with an appropriate averaging technique to estimate the effective reproduction number, $\mathcal R_e$, at time $t$.  Thus, with a time-series of this case/tracing data, the corresponding point estimates of $\mathcal R_e$ can be calculated in this way for each feasible time point in the series.  
 \begin{figure}[t]
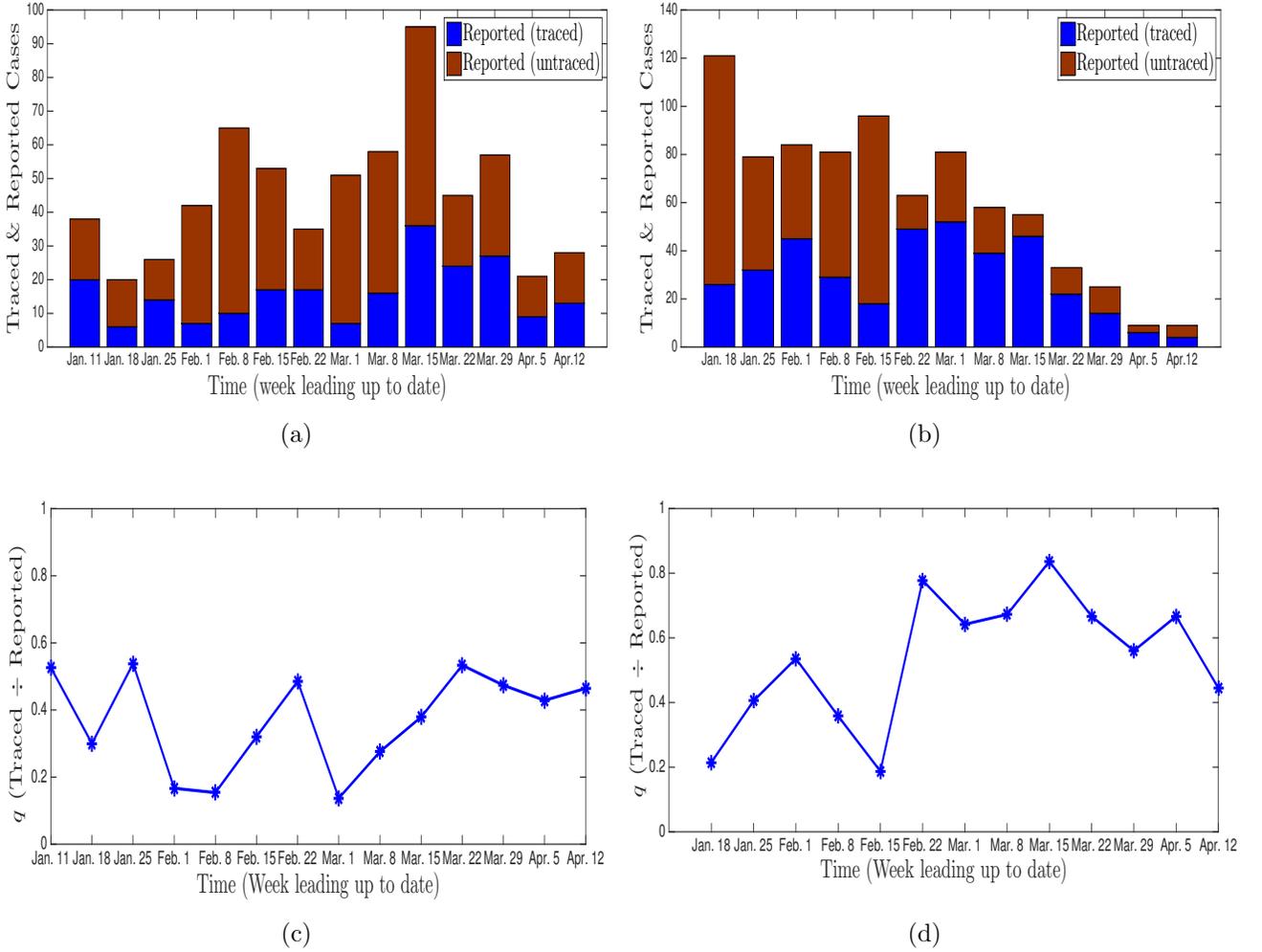
      
    \subfigure[][]{\label{Gdata} \includegraphics[width=.5\textwidth,height=.35\textwidth]{Guinea_data.pdf}}
     \subfigure[][]{\label{Sdata} \includegraphics[width=.5\textwidth,height=.35\textwidth]{Sierra_data.pdf}} \\
      \subfigure[][]{\label{GdataQ} \includegraphics[width=.5\textwidth,height=.35\textwidth]{Guinea_dataQ.pdf}}
     \subfigure[][]{\label{SdataQ} \includegraphics[width=.5\textwidth,height=.35\textwidth]{Sierra_dataQ.pdf}}
   \caption{WHO data of weekly reported and traced cases for Guinea (a) and Sierra Leone (b) for mid January to mid April.  The fraction of reported cases that are known traced contacts, $q$, is plotted versus week for Guinea (c) and Sierra Leone (d).  The data was obtained from weekly WHO situation reports (\cite{WHOSit}). }
   \label{WHOdataFig}
   \end{figure} 
   
In comparison with the other methods, our proposed procedure is similar to the statistical approach first described by Wallinga and Teunis \cite{wallinga2004}, however, our approach shows a potential way to, both, incorporate specific contact tracing data and remove the need for utilizing (and estimating) the serial interval distribution.  We remark that Cauchemez et al. utilize contact tracing data in order to infer the serial interval and use this to ``reconstruct tracing information for untraced cases'' assuming perfect reporting \cite{Cauchemez2006}.  Our work suggests that contact tracing and reported case data, by themselves, may be sufficient to estimate $\mathcal R_e$ based on formula (\ref{ReImperfect}).  Also, since our formulas are derived from a mechanistic differential equation model, we provide a bridge between compartmental deterministic models and the statistical methods based on reconstructing an infection network.  With our general detailed model of contact tracing in the background, we can explore the effect of relaxing simplifying assumptions about the tracing process.  In addition, the formulas (\ref{RelChange}) and (\ref{RelChange2}) provide a fresh perspective on  real-time evaluation of efficacy of control measures, in particular contact tracing, which was a problem considered in \cite{Cauchemez2006}. 

Recall that the formula was derived to hold with arbitrary underreporting ($\rho\leq 1$) in the case of perfect monitoring, $k_m=0$, and hold for 100\% reporting ($\rho=1$) in the case of imperfect monitoring, $k_m>0$.     It should be possible to obtain the $k$ and $k_m$ directly from contact tracing data, however, we were not able to find such data, and we recommend public health officials to keep and publicly share the necessary records for the estimation of average traced infected contacts found per reported case.   Despite this limitation of direct data, in a similar spirit to the methods of Wallinga and Teunis \cite{wallinga2004}, we utilize estimates of the serial interval distribution for Ebola from previous works \cite{Chowell}, in order to infer likely ``averaged'' infection networks, from which values of $k$ and $k_m$ are obtained under different scenarios of tracing efficacy.   In this way, we calculate weekly point estimates of $\mathcal R_e$ utilizing formula (\ref{ReImperfect}) with weekly case and tracing data, along with incorporating previous estimates of the serial interval distribution of Ebola, in order to evaluate the potential impact of contact tracing on the epidemic.
%Our method is a bridge between ODE and network model....use tracing data in simple way.....applying it more like method of Wallinga

The WHO \cite{WHOSit} has recently published weekly counts of traced individuals among reported cases in Guinea and Sierra Leone as part of the weekly WHO situation reports (see Figure \ref{WHOdataFig}).  %For example, the WHO reports that in Guinea, the percent of new confirmed and probable cases from registered contacts was 53\% (20/38) for the week to Jan. 11, 30\% (6/20) for the week to Jan. 18,  54\% (14/26) for the week to Jan. 25,  17\% (7/42) for the week to Feb. 1, and 15\% (10/65) for the week to Feb. 8.  In our model, $q$ represents the fraction of new confirmed cases from registered contacts. 
 Recall, by (\ref{ReImperfect}), the effective reproduction number can be estimated as $\mathcal R_e=k\left( \frac{1-q}{q} \right)+k_m$, where $k$ and $k_m$ are the infected traced contacts identified per untraced and traced reported case, respectively.  Define $C_j$ and $T_j$ as the cumulative reported cases and traced (reported) cases in week $j$, respectively. The serial interval, which we denote by $\sigma$, is defined as the time from illness onset in the primary case to illness onset in the secondary case \cite{Fine}.  The serial interval of Ebola has been estimated to be a Weibull distribution (random variable denoted $W$) with scale and shape parameters estimated at 13.6 and 2.6, respectively \cite{Chowell}.  Therefore, we estimate the probabilities $p_1$, $p_2$, and $p_3$ of the serial interval being $\leq 7$ days, $7-14$ days, and $14-21$ days, as: $p_1=\mathbb P(W<7) =0.18$, $p_2=\mathbb P(7<W<14) = 0.51 $, and $p_3=1-p_2-p_1=0.31 \approx \mathbb P(14<W<21)$.  Suppose that the tracing/monitoring reproduction number reduction is $0\leq \theta\leq 1$, i.e. $\mathcal R_m=\theta \mathcal R$.  Then, the average number of traced infected identified in week $j+n$ per untraced and traced reported case from week $j$, $k^{j,n}$ and $k^{j,n}_m$, and the fraction of reported cases that are traced in week $j+n$, $q_{j+n}$, are:
 \begin{align*}
 k^{j,n}&=\frac{T_{j+n}}{C_j-(1-\theta)T_j}, \quad  k_m^{j,n}=\frac{\theta T_{j+n}}{C_j-(1-\theta)T_j}, \quad q_{j+n}=\frac{T_{j+n}}{C_{j+n}}, 
 \end{align*}
 since $k^{j,n} (C_j-T_j) + k_m^{j,n} T_j = T_{j+n}$ and $k_m^{j,n}=\theta k^{j,n}$.  
 
 Thus, we estimate the (expectation of) effective reproduction number of week $j$, $\mathbb E(\mathcal R_e^{(j)})$, as: 
  \begin{align*}
  \mathbb{E}(\mathcal R_e^{(j)})&=\sum\limits_n \mathbb{P}( n-j-1 \leq \sigma \leq  n-j \ \text{weeks}) \ \cdot \  \mathbb{E}(\mathcal R_e^{(j)} \ | \ n-j-1 \leq \sigma \leq  n-j \ \text{weeks}) \\
&= \sum\limits_n p_n \ \cdot \  \left(k^{j,n} \left(\frac{1- q_{j+n}}{q_{j+n}}\right) + k_m^{j,n} \right) \\
&= \frac{1}{C_j-(1-\theta)T_j} \displaystyle\sum\limits_n p_n \left( C_{j+n}-(1-\theta)T_{j+n} \right) \label{ReData} \tag{14}
 \end{align*}
 Therefore $\mathcal R_e^{(j)}$ is estimated as the ratio of the sum of untraced and a fraction $\theta$ of traced reported cases, weighted (based on serial interval distribution) over the weeks following $j$, to the sum of untraced and this fraction $\theta$ of traced reported cases during week $j$.  In the case of perfect tracing, $\theta=0$, only weekly untraced reported cases ($C_n-T_n$) are involved in the calculation of $\mathcal R_e$.  Conversely, in the case that tracing has no impact, $\theta=1$, all reported cases (traced or untraced) are equally weighted in the calculation of $\mathcal R_e$, thus the specific tracing data giving traced reported cases per week ($T_n$) is not utilized.   We note that a more rigorous maximum likelihood analysis can be developed with our approach similar to \cite{wallinga2004} and confidence intervals can be generated for the weekly estimates of $\mathcal R_e$.  However, since we are more interested in a relatively straightforward application of formula (\ref{ReImperfect}) to data and quantifying the potential impact of contact tracing, we simply calculate the expected value of the weekly reproductive number, $\mathcal R_e^{(j)}$, by the law of total expectation. 

  \begin{figure}[t]
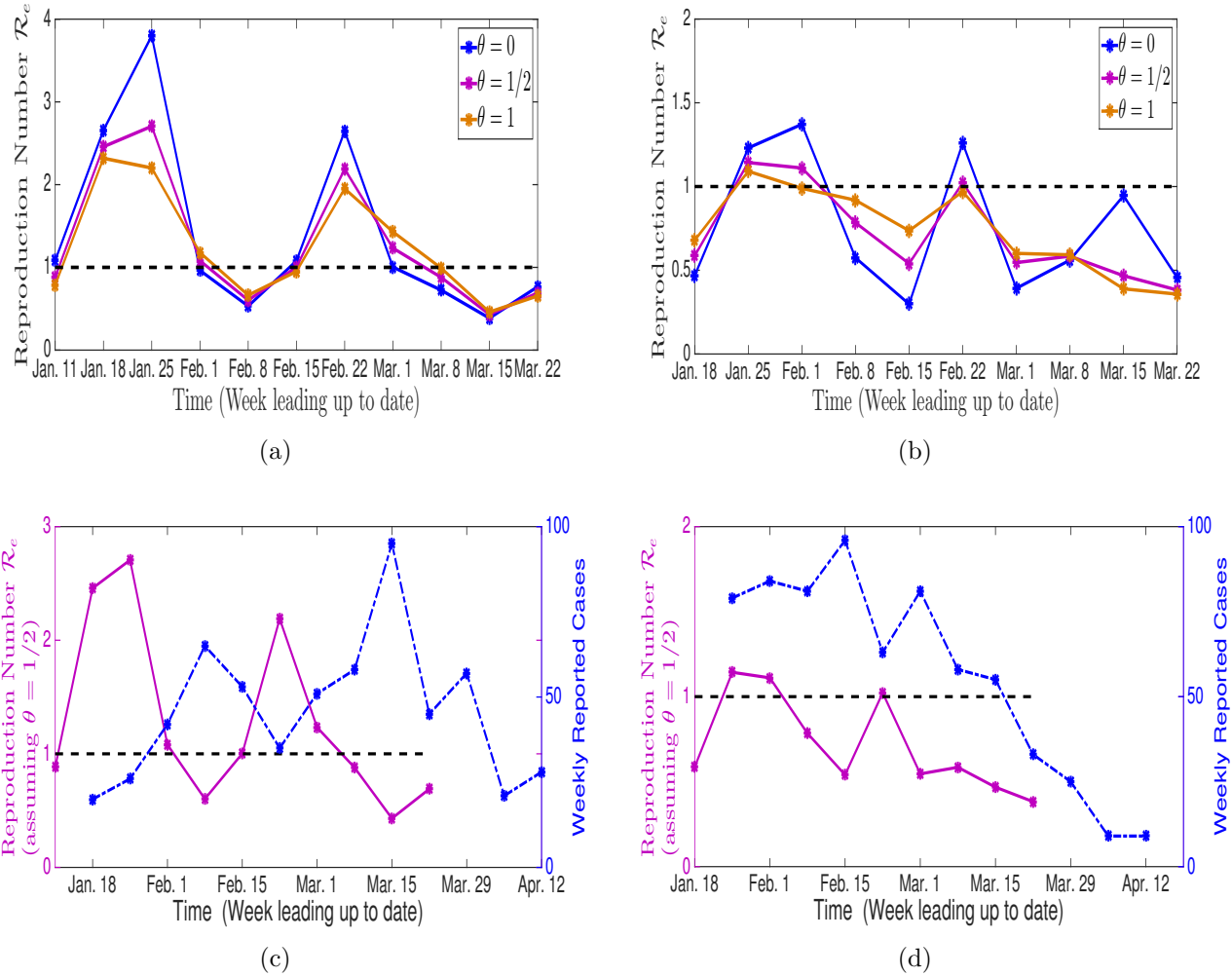
      
    \subfigure[][]{\label{GRe} \includegraphics[width=.5\textwidth,height=.35\textwidth]{Guinea_Re_thetaN.pdf}}
     \subfigure[][]{\label{SRe} \includegraphics[width=.5\textwidth,height=.35\textwidth]{Sierra_Re_thetaN.pdf}}  \\
      \subfigure[][]{\label{GReC} \includegraphics[width=.5\textwidth,height=.35\textwidth]{ReVsCases_G.pdf}}
     \subfigure[][]{\label{SReC} \includegraphics[width=.5\textwidth,height=.35\textwidth]{ReVsCases_S.pdf}} 
      \caption{Weekly estimated values of  $\mathcal R_e$ ( $\mathbb{E}(\mathcal R_e^{(j)})$ in (\ref{ReData}) ) for both Guinea (a) and Sierra Leone (b) assuming tracing reproduction number proportion $\theta=0$, i.e. perfect tracing (blue), $\theta=0.5$ (purple), and $\theta=1$ (orange).  In (c) and (d), estimates of $\mathcal R_e$ assuming $\theta=0.5$ are shown alongside the weekly case count curve for Guinea and Sierra Leone, respectively. }
   \label{ReEst}
   \end{figure} 

In Figure \ref{GRe} and \ref{SRe}, the weekly point estimates of $\mathcal R_e$ for Guinea and Sierra Leone are obtained using the formula (\ref{ReData}) for different levels of tracing efficacy; reproduction number proportion $\theta=0$ (perfect tracing), $\theta=1/2$, $\theta=1$ (no impact of tracing).  Figure \ref{GRe} and \ref{SRe} shows the estimates of $\mathcal R_e$ in the case $\theta=1/2$ alongside the weekly reported cases for Guinea and Sierra Leone, respectively.  Notice that, especially for Guinea, the reproduction number $\mathcal R_e$ fluctuates substantially about the epidemic threshold $\mathcal R_e=1$ over the time period depicted.  These fluctuations coincide with changes in the weekly incidence curve with a delay representative of the serial interval.  For the Figures \ref{GRe} and \ref{SRe}, one should keep in mind that formula (\ref{ReData}) is utilized to calculate $\mathcal R_e$, thus reported cases \emph{and} the amount of traced cases affect the values of $\mathcal R_e$.  In comparison with Guinea, the estimates of $\mathcal R_e$ for Sierra Leone have a more decreasing trend with smaller values and less fluctuation, and this pattern also holds true in the reported case count.  In addition, Figures \ref{GdataQ}, \ref{SdataQ} show that the fraction of traced cases, $q$, is larger for Sierra Leone.  This suggests that contact tracing has been more successful in Sierra Leone over the time period, as will be discussed below and further depicted in Figure \ref{CTimpact}.  Supporting this notion, delving deeper into the given data in Figures \ref{GdataQ} and \ref{SdataQ}, we notice that the auto-correlation of the time-series of $q$ for Guinea is more random than that of Sierra Leone.  Indeed, for Sierra Leone the data for $q$ has larger auto-correlation corresponding to time lags less than three weeks, which is expected based on the serial interval of Ebola.  This suggests that perhaps Sierra Leone has had a more constant tracing effort when compared to Guinea, which may explain the different outcomes of the countries over this time period.  Further comparison of the countries' responses and epidemics is an important topic and will be studied in future work. 

 The reproduction number fluctuates more when assuming perfect tracing ($\theta=0$), as only the untraced cases are assumed to cause secondary infections and the fraction of reported cases that are traced contacts, $q$, is not fixed from week to week.  Certainly the limited amount of tracing data, combined with variability in $q$, may cause error in our calculation of $\mathcal R_e$.  In addition, recall the simplifying assumptions of either perfect tracing or perfect reporting under which formula (\ref{ReImperfect}) holds.  Also, it is unclear how much of the variation in $\mathcal R_e$ is due to observation error or underlying stochasticity in the transmission process versus actual changes in reproduction number (transmission potential).  The success of formula (\ref{ReImperfect}) in finding the ``true'' expected value of the (possibly time-varying) effective reproduction number $\mathcal R_e$ will be explored in future work with simulated data.  
Also, as mentioned previously, a major limitation is our implicit inference of the parameter $k$ and $k_m$ based on the estimated serial interval distribution.  Despite the limitations, our main goal in this section is to illustrate the potential value in utilizing contact tracing observables to estimate $\mathcal R_e$ through formula (\ref{ReImperfect}).  We emphasize that our general approach for calculating $\mathcal R_e$ with can avoid the need to utilize a serial interval distribution.  Instead, contact tracing observables can be directly used to calculate $k$ and $k_m$ in the formula (\ref{ReImperfect}) if the detailed tracing records are available.  

 \begin{figure}[t!]
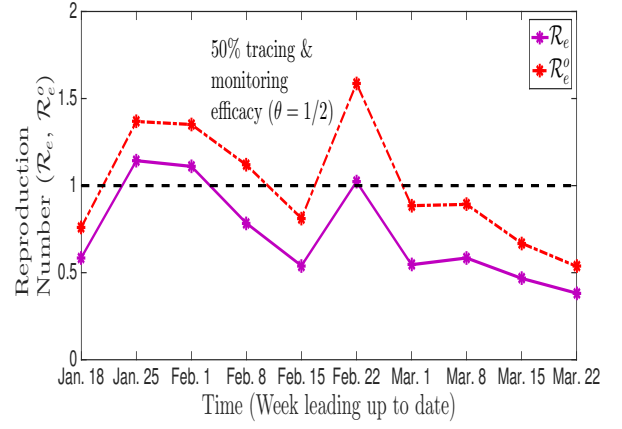
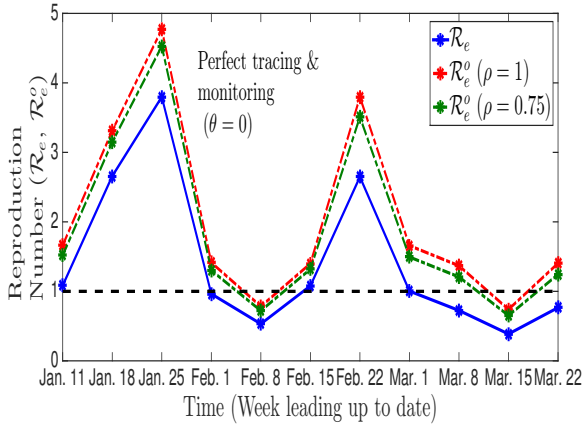
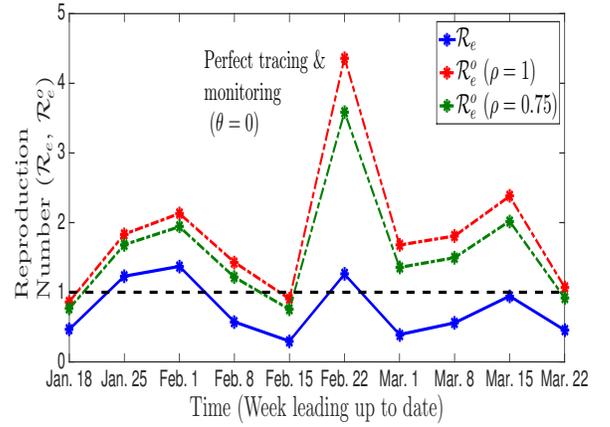
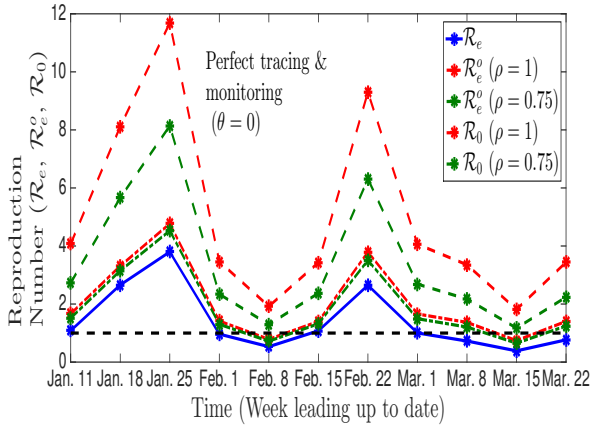
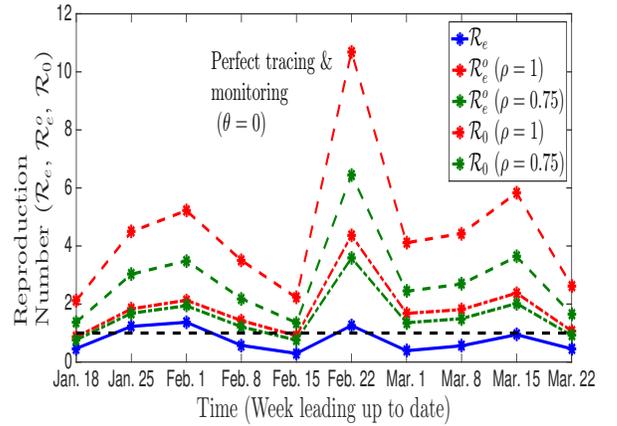
      
    \subfigure[][]{\label{G05} \includegraphics[width=.5\textwidth,height=.35\textwidth]{Guinea_Re_theta05N.pdf}}
     \subfigure[][]{\label{S05} \includegraphics[width=.5\textwidth,height=.35\textwidth]{Sierra_Re_theta05N.pdf}} \\
   \subfigure[][]{\label{G0} \includegraphics[width=.5\textwidth,height=.35\textwidth]{Guinea_Re_theta0N.pdf}}
     \subfigure[][]{\label{S0} \includegraphics[width=.5\textwidth,height=.35\textwidth]{Sierra_Re_theta0N.pdf}} \\
      \subfigure[][]{\label{G00} \includegraphics[width=.5\textwidth,height=.35\textwidth]{Guinea_Re_theta0R0_N.pdf}}
     \subfigure[][]{\label{S00} \includegraphics[width=.5\textwidth,height=.35\textwidth]{Sierra_Re_theta0R0_N.pdf}} 
   \caption{\emph{Impact of contact tracing}: Weekly estimated values of $\mathcal R_e^o$ (red) (reproduction number without tracing, calculated using (\ref{RelChange}) and (\ref{RelChange2}) ) are plotted above the estimated values of $\mathcal R_e$ for both Guinea (a,c,e) and Sierra Leone (b,d,f) assuming traced contact reproduction number reduction $\theta=0.5$ (a,b) (purple) and $\rho=1$; and perfect tracing, i.e. $\theta=0$ (c-f) (blue), for two different levels of reporting of untraced cases, $75 \% $ (green) and $100 \%$ (red).  In (e) and (f), the basic reproduction number, $\mathcal R_0$, without any case hospitalization is also displayed assuming $T_h=4$ and $T_u=9.8$. }
   \label{CTimpact}
   \end{figure} 
   
  Our methodology can quantify the impact of contact tracing and case hospitalization independent of tracing (passive case finding) in reducing the reproduction number utilizing formulas (\ref{RelChange}) and (\ref{RelChange2}).  In particular, under the assumption of perfect tracing ($\theta=0$), for given level of reporting of untraced cases (fraction $\rho$), the effective reproduction number \emph{without} contact tracing, $\mathcal R_e^o$, can be calculated with the weekly estimates of $\mathcal R_e$, the data for fraction of reported cases that are traced contacts ($q$), and formula (\ref{RelChange}).  In Figures \ref{G0} and \ref{S0}, $\mathcal R_e^o$ is plotted above $\mathcal R_e$ for two different levels of reporting untraced cases, $75 \% $ and $100 \%$, for both Guinea and Sierra Leone.  We also consider impact of contact tracing in the case of imperfect tracing (and 100 \% reporting), namely for tracing specific reproduction number reduction of $50 \%$ ($\theta=0.5$) in Figures \ref{G05} and \ref{S05}.  From the estimates of $\mathcal R_e$ and data for $q$, we also determine the reproduction number, $\mathcal R_0$, without contact tracing or any case hospitalization given values of $\rho$ and $T_h/T_u$ in the case of perfect tracing.  Figures \ref{G00} and \ref{S00}, compare weekly values of $\mathcal R_e$ with $\mathcal R_e^o$ and $\mathcal R_0$ for $\rho=0.75$ and $\rho=1$ (assuming $T_h=4$ and $T_u=9.8$).  Thus we can utilize formulas (\ref{q1}), (\ref{ReImperfect}), (\ref{RelChange}) and (\ref{RelChange2}) with contact tracing observables in order to determine the reduction in reproduction number due to contact tracing and passive case finding, which allows for evaluating the effectiveness of these different interventions. 
  
   These figures illustrate that contact tracing in both countries had an impact on $\mathcal R_e$ from January to the middle of March, but the impact has been much more dramatic in Sierra Leone.  Indeed, even when assuming $50 \%$ efficacy of contact tracing/monitoring ($\theta=0.5$), our calculations show that contact tracing has pushed the average effective reproduction number over this time period below one in Sierra Leone.  Our results indicate that contact tracing efforts may need to be increased in Guinea in order to ensure epidemic control and to reduce the risk from unknown chains of transmission.  The target goal of WHO is ``for 100\% of new cases to arise among registered contacts, so that each and every chain of transmission can be tracked and terminated''.    While this is certainly an ideal target, it is important to at least reduce and stabilize $\mathcal R_e$ substantially below one in order to control the epidemic with limited resources and end the outbreak in a timely manner.

\subsection{Impact of contact tracing and control on disease dynamics in general model}\label{SecImpact}

In order to investigate the coupled effects of contact tracing, case reporting and disease characteristics on epidemic spread, we analyze the general model (\ref{M2.1}).  In particular, variable tracing responsiveness and monitoring protocols, along with public health campaigns independent of contact tracing, are evaluated according to their impact on the effective reproduction number $\mathcal R_e$ and cumulative case count.  In addition, computation of $\mathcal R_e$ in this general setting helps to quantify potential underestimation of $\mathcal R_e$ in formulas derived in Section \ref{RuleOfThumb} for the simplifying cases of ``perfect'' tracing speed ($p_e=1$), and either perfect monitoring efficacy ($\beta_m=0$) or perfect reporting ($\rho=1$).   

The effective reproduction number $\mathcal R_e$ can be reduced by both contact tracing and public campaigns aimed at increasing case hospitalization and decreasing time until hospitalization among untraced individuals (passive case finding).  Considering both types of control strategies together, we find that passive case finding has a synergistic effect on the efficacy of contact tracing and reduction of $\mathcal R_e$.  In particular, Figure \ref{contour} shows that reduction in time until hospitalization (independent of tracing), $T_h$, has a nonlinear impact on the proportion of reported case contacts that are traced, $\phi$, required to bring $\mathcal R_e$ below one.  Indeed, reducing $T_h$ has the effect of both decreasing the reproduction number and number of contacts of untraced individuals and increasing the efficacy of tracing through improvements in tracing speed ($p_e$ is increased, $T_i$ is decreased).  In the \ref{AddFigs}, Figure \ref{RhoPhi} illustrates a similar synergistic effect observed for increasing case hospitalization/reporting, $\rho$.

   \begin{figure}[t!]
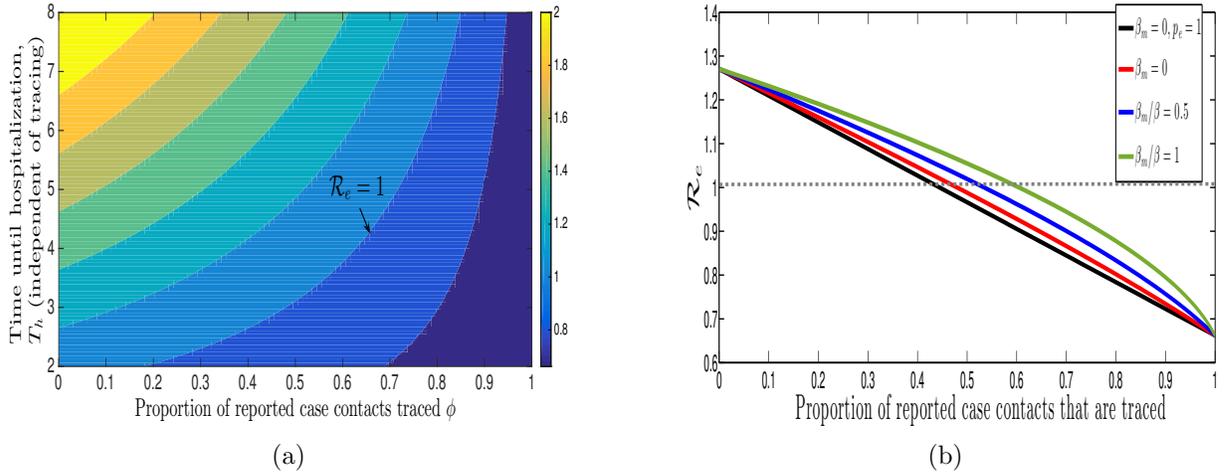
       
\subfigure[][]{ \label{contour} \includegraphics[width=.5\textwidth,height=.35\textwidth]{ContourPhiTh1.pdf}}
\subfigure[][]{ \label{beta_m} \includegraphics[width=.5\textwidth,height=.35\textwidth]{betaM.pdf}}
 \caption{ (a) Contour map of $\mathcal R_e$ when varying time until hospitalization/reporting ($T_h$) and proportion of reported case contacts that are traced ($\phi$) in the special case of perfect monitoring ($\beta_m=0$).  Note that for the other parameters, we set:  $\beta=0.4/N$ where $N= 6 \times 10^{6}$ (population of Sierra Leone), infectious period for unreported infected $T_u=9.8$, proportion of untraced infected hospitalized $\rho=0.75$, incubation period $\tau=10$ and tracking delay $\delta=2$.  Note $p_e$, $p_e^e$, $p_e^i$, $T_i$ are given by formulas derived earlier. (b) The effect of varying $\beta_m$ (monitoring efficacy) and $\phi$.  Note that $T_m=2$ and all other parameters are as in (a).}
 \end{figure}    
 
   \begin{figure}[t!]
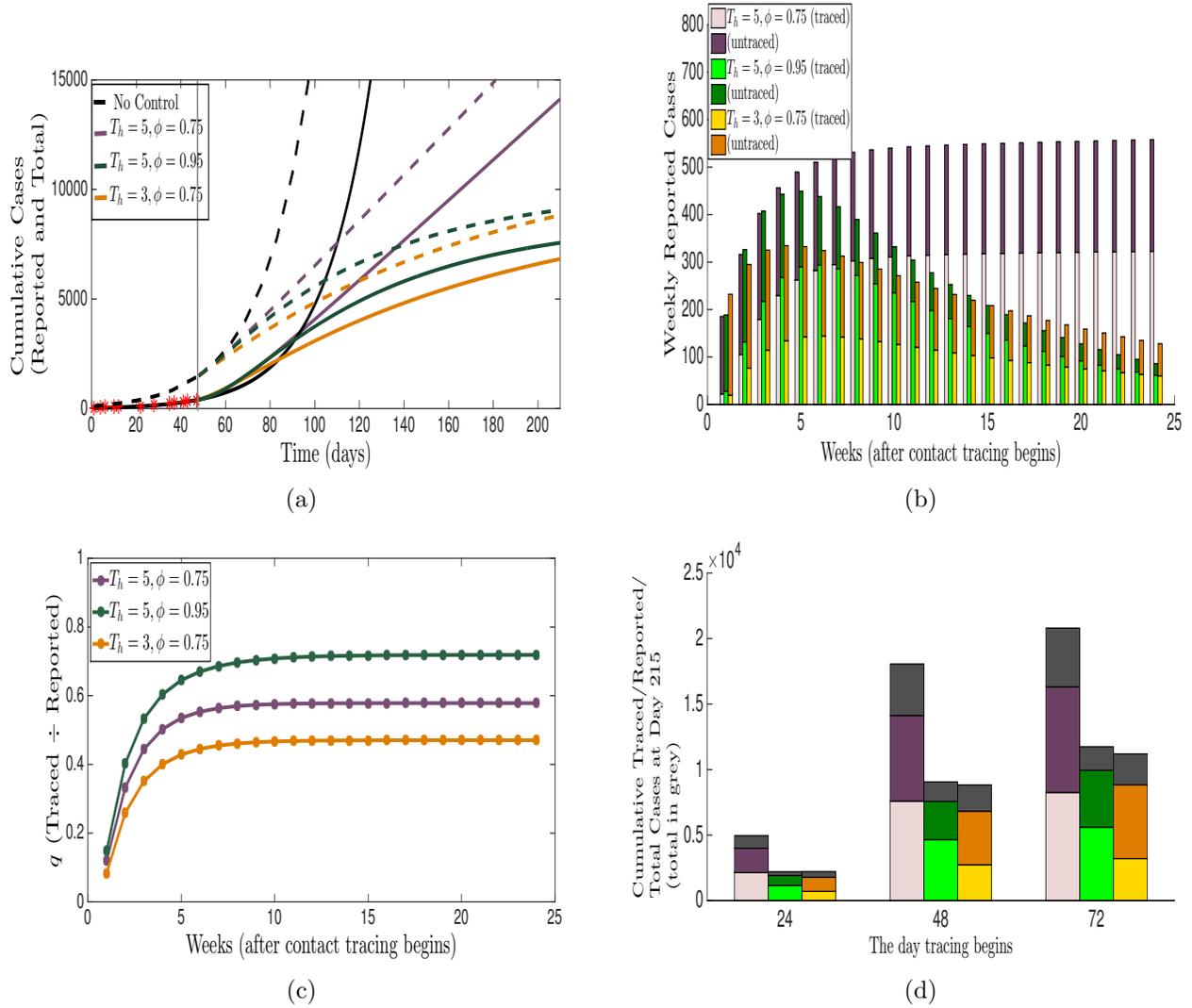
       
\subfigure[][]{ \includegraphics[width=.5\textwidth,height=.35\textwidth]{CumNew.pdf}}
\subfigure[][]{ \includegraphics[width=.5\textwidth,height=.4\textwidth]{WeeklyBarN.pdf}} \\
\subfigure[][]{\label{FracTrace} \includegraphics[width=.5\textwidth,height=.35\textwidth]{ProportionTraced.pdf}}
\subfigure[][]{\label{VaryStart} \includegraphics[width=.5\textwidth,height=.35\textwidth]{varyTraceStartN.pdf}}
 \caption{ (a) Simulations of cumulative reported and total cases before and after contact tracing/enhanced control measures.  The before represents the initial exponential increase of the epidemic in Sierra Leone with data fit by model using parameters $\beta=0.4/N$ (as above), $\rho=0.5$, yielding $\mathcal R_e=2$.  After the vertical dashed line (48 days from $t=0$) shows the trajectory of reported (solid) and total (dashed) cumulative cases for four different scenarios: no control ($\mathcal R_e=2$); $T_h=5,\rho=0.75$ ($\mathcal R_e=1.0068$);$T_h=5,\rho=0.95$ ($\mathcal R_e=0.7531$);$T_h=3,\rho=0.75$ ($\mathcal R_e=0.8679$).  Also $\beta_m=0.5 \beta$, $\rho=0.75$, $\delta=2$, $T_m=2$ and $\tau=10$.    Note that actual cumulative reported cases for Sierra Leone is $\approx \ 9,500$ at end time of simulations ( Dec. 31). (b) Cumulative weekly reported and traced cases for the scenarios in (a).  (c)  Weekly fraction of reported cases which are traced contacts  (d) Cumulative total, reported and traced at day 215 for the three scenarios in (a) when tracing (and enhanced control) begins at day 24, 48 and 72. }
 \label{CumCases}
 \end{figure}

Consideration of the details of the contact tracing process leads to questions about how distinct tracing and monitoring parameters affect the disease dynamics.  We explore how the extent of contact tracing (through proportion of reported case contacts that are traced, $\phi$), strictness of monitoring protocol (transmission rate of traced contacts, $\beta_m$) and delay between index case reporting to tracing ($\delta$), affect $\mathcal R_e$ (Figure \ref{beta_m}, \ref{DeltaTau}) in \ref{AddFigs}).  We generally find that contact tracing more individuals has a larger impact then the strictness of monitoring and tracing speed. Indeed, increasing the fraction of contacts traced, $\phi$, is a more ``upstream'' intervention and the relatively long mean incubation period $\tau$ ($\approx 10$ days) of Ebola allows for more time to track contacts.  This long incubation period, combined with the fact that infectiousness and symptoms onset are coincidental and transmission occurs only through direct contact, are disease traits which make Ebola particularly amenable to contact tracing.  Our observations are corroborated in a global sensitivity analysis of $\mathcal R_e$ with respect to the model parameters (Figure \ref{ReSens} in \ref{AddFigs}).  Overall, this sensitivity analysis suggests that the proportion of (untraced) cases hospitalized/reported, $\rho$, has the largest impact on $\mathcal R_e$, followed by time until hospitalization (independent of tracing) $T_h$, transmission rate $\beta$ and proportion of reported case contacts traced $\phi$.

In addition to the effect on $\mathcal R_e$, we investigate the impact of parameters and various control scenarios on the cumulative case count in the model (\ref{M2.1}).  Assuming an initial reproduction number of $\mathcal R_e=2$ (with $T_h=5$, $T_u=9.8$, $\rho=0.5$) for the 48 days between May 27 and July 14 (a reasonable fit for the initial exponential epidemic growth in Sierra Leone \cite{Chowell, Team}), we simulate forward the cumulative cases under different control scenarios in Figure \ref{CumCases}.  The simulations show that minimizing effective reproduction number $\mathcal R_e$ does not always align with minimal cumulative cases over a time period (or over the entire epidemic; example shown in SI).  In particular, reducing time until hospitalization to $T_h=3$ with contact tracing proportion $\phi=0.75$ has a larger impact on cumulative cases over the time period $t=48$ to $t=210$ than the case when $T_h=5$, $\phi=0.95$, even though the respective $\mathcal R_e$ values have opposite ordering.  Also, the number of traced reported cases for the first scenario ($T_h=3,\phi=0.75$) is substantially less than the other scenario ($T_h=5,\phi=0.95$), which may be important given limited contact tracing resources.  While our model does not explicitly account for tracing susceptible individuals, the number of total traced individuals can be thought of as proportional to the number of traced infected with proportionality constant $1/p$ where $p$ is the probability of transmission per contact.  Simulations show that timing of the control measures are crucial in reducing both cumulative cases and the number of contacts which need to be traced (Figure \ref{VaryStart}).  Global sensitivity analysis of total cumulative cases at $t=148$ is similar to the sensitivity analysis of $\mathcal R_e$, except that $\phi$ has slightly less impact, and the tracing delay $\delta$ and incubation period $\tau$ (both of which affect the probability of reaching an infected contact during incubation period, $p_e$) gain more importance.  Summarizing the results, we find that passive case finding along with contact tracing, and early initiation of the interventions are key components in controlling an Ebola epidemic under the constraint of limited resources.  

To conclude this section, we remark upon how the simple formulas derived for $\mathcal R_e$ under the assumption of ``perfect'' tracing or perfect reporting differ from the more general case.  Clearly the case of ``perfect'' tracing ($\beta_m=0$, $p_e=1$) results in lower values of $\mathcal R_e$ than in the imperfect setting of $\beta_m>0$, $p_e<1$.  However, as shown in Figure \ref{beta_m}, the simplification of perfect tracing does not lead to substantially lower values of $\mathcal R_e$ than the imperfect case, especially in the general case of ``perfect monitoring'' ($\beta_m=0$, $p_e\leq 1$).  A more interesting question to explore is the following: how well does the formula in terms of tracing observables, $\mathcal R_e=k(1-q)/q+k_m$ (where $k$ and $k_m$ is $\#$ of infected traced per reported untraced and traced case, respectively,  and $q$ is fraction of reported cases that are traced), hold in the general setting?  In Figure \ref{FracTrace}, it is shown that the weekly fraction of reported cases that are traced, $q(t)$, converges to a constant value for each scenario.  Thus, a constant value $q$ seems to be possible to define more generally, but this quantity, along with $k$ and $k_m$, may be difficult to formulate in a simple expression of model parameters.   It is conceivable that such a formula can hold in general by a similar heuristic argument for the formula holding with underreporting given in Section \ref{RuleOfThumb} for the case of perfect tracing.  Indeed, while imperfect reporting and tracing lead to ``missing cases'' in individual transmission chains of an index reported case, more reported cases are generated throughout the network, potentially balancing out the effect of imperfect reporting and tracing.

%Figure .......
%synergy, monitoring,....

%Probing formula?   traced individuals causing infection increases $q$ but also increases $k$, so formula might still hold

  \FloatBarrier

\section{Discussion} \label{discussion}

During the 2014-2015 West Africa Ebola epidemic, it was clear that control measures were insufficient as the number of cases grew rapidly, but ultimately improvements in individual behavior, community engagement, early case isolation and safe burials have slowed down the disease trajectory. To have the best chance of averting another epidemic of Ebola or another emergent pathogen, it is important to know which particular interventions were most responsible for reversing the momentum. In this study, we focus on analyzing the effectiveness of contact tracing along with passive case finding for the West Africa Ebola outbreak and other potential epidemics in the future.

We have developed a mechanistic differential equation model of the contact tracing process in an infectious disease outbreak, particularly relevant to Ebola outbreaks.  It is widely recognized that contact tracing is an integral component of Ebola interventions \cite{WHOSit}.  Our deterministic model explicitly links contact tracing to reported case transmissions, and incorporates the coupled effects of disease traits, passive case finding (independent of tracing) and tracing/monitoring efforts.  The novel framework of the model is separating infectious cases that will be reported, further separating a fraction of their transmissions which will be traced and determining the action of tracers based on infectious stage of the traced contact.  This captures the dynamic structure of contact tracing.  

%Through sensitivity analysis and simulations of our general model, we find that starting contact tracing early in an outbreak, reducing time to hospitalization and increased case reporting (passive case finding independent of tracing), and tracing a high percentage of (at least high-risk) contacts are crucial for efficient control of the disease.  Our results show that the strictness of monitoring protocols for traced contacts and length of tracking delay after index case reporting have less impact on the success of contact tracing in Ebola outbreaks..........Add one more sentence about $\mathcal R_e$ maybe

 The explicit linking to transmission in our model brings out an interesting connection between observable contact tracing data and the effective reproduction $\mathcal R_e$, yielding novel formulas for real-time estimates of $\mathcal R_e$.  We show that under the assumption of perfect tracking, along with either perfect monitoring of traced contacts or perfect reporting of all cases, $\mathcal R_e=k(1-q)/q+k_m$ where $k$ and $k_m$ are the average number of secondary infected individuals traced per primary reported untraced and trace case, respectively, and $(1-q)/q$ is the odds that a reported cases is not a traced contact.  This formula (\ref{q1}) can provide simple current estimates of $\mathcal R_e$ in the population, as illustrated utilizing a limited dataset of weekly reported and traced Ebola cases in Guinea and Sierra Leone from mid January to April 2015.  Our results indicate that contact tracing has likely been important in calming the epidemic for Sierra Leone over this time period, but had less impact on the disease dynamics in Guinea.  Increasing contact tracing effort and efficacy has the potential to eliminate the disease at this later stage in the outbreak.  The relative contribution of contact tracing in influencing the value of $\mathcal R_e$ is increasing with $q$, the fraction of reported cases that are traced contacts, and  $\mathcal R_e\rightarrow k_m$ as $q\rightarrow 1$.  Therefore, the goal that WHO has set forth to bring the fraction of traced reported $q$ to one in order to ``eliminate all unknown transmission chains'' is indicative of utilizing contact tracing to end the epidemic.  

There is less applicable data on contact tracing in 2014 when the Ebola outbreak was at a more critical stage in West Africa, and there are more questions about the efficacy of tracing during times of intense transmission as contact tracers may have been overwhelmed with high volumes of contacts to trace \cite{arwady2015evolution}.  While further investigation into the efficacy of contact tracing in West Africa throughout the epidemic utilizing all data sources is needed, this is beyond the scope of our current work.  However, we can utilize our sensitivity analysis and model simulations to gain insight on the impact and optimal deployment of contact tracing during different stages in the outbreak.  First, our study suggests that a combination of interventions can work synergistically, which may help to alleviate logistical problems with contact tracing in high transmission areas.  In particular, improvements in passive case finding (faster and more frequent isolation independent of contact tracing) together with contact tracing are most efficient in reducing incidence.  In this case, even with limited resources, contact tracing can play an important role in controlling hot zones.

Contact tracing is a multi-faceted strategy with multiple ways of impacting disease dynamics and distinct operational variables to consider.  For instance, tracking the contacts and notifying/educating them about their status can reduce their secondary transmission if they develop symptoms, even if they are not isolated right away.  Our model includes the details of the contact tracing process along with quantifying their relative contribution to the overall efficacy of the intervention.  We find that the fraction of (high-risk) contacts traced per reported case is more important than delays in tracking (since the incubation period of Ebola is relatively long) and potentially the strictness of monitoring protocol (assuming their transmission and time to isolation is reduced enough).   Optimizing the allocation of resources to tracing more individuals can possibly be achieved, for example, by checking up on traced contacts less often, but establishing trust and educating on the first visit to reduce the risk of that contact transmitting the disease.  Another finding from simulations of our model is the dramatic difference that earlier intervention can make when applied before the number of cases is high.  Contact tracing effectively at the beginning of an epidemic substantially reduces cumulative cases and overall need of resources.  Certainly, it also vastly increases the probability of ending an outbreak rapidly, as in the case of Nigeria and Senegal in 2014.   Finally,  contact tracing data can be utilized to evaluate the overall epidemic status and efficacy of interventions in real time in order to adjust efforts to the current situation.

There are several limitations to our model, along with promising directions for future work.  As mentioned in the previous paragraphs, limited resources in the face of an inflating epidemic can severely restrict the impact of contact tracing.  Although our results shed light on how to alleviate this problem through passive case finding or adjusting monitoring protocols, it may be informative to explicitly incorporate tracing logistics in the model.   Another future line of research will be to investigate whether formulas relating reproduction number to contact tracing observables, such as $\mathcal R_e=k(1-q)/q+k_m$, can be derived for contact networks, along with more generally extending our model to networks.   In addition, further work will involve evaluating the accuracy of these formulas through simulated data from stochastic versions of the model.  Also, other disease distributions, e.g. gamma-distributed incubation and infectious period, can be utilized to better match real data and generalize our approach.  Our work establishes a theoretical framework to build upon for the mechanistic modeling of contact tracing and utilizing contact tracing observables to obtain simple real-time estimates of effective reproduction number, along with evaluating the impact of control strategies on an epidemic. 

 \section*{Acknowledgement}
 We would like to thank Joshua Weitz for helpful comments and suggestions which improved this work.  We would also like to thank Leonid Bunimovich for a helpful discussion about auto-correlation in the data.  Additionally, we thank the anonymous reviewers and editor for comments which have led to an improved manuscript.

\appendix
\setcounter{figure}{0} 
\section{Calculation of effective reproduction number $\mathcal R_e$}\label{A1}

The effective reproduction number $\mathcal R_e$ can be defined utilizing the next-generation approach \cite{VanDen}.    First define the feasible region for the system (\ref{M2.1}) as 
$$\Gamma=\left\{\mathbf{x}=(S,E,E_c^e,E_c^i,I_h,I_u,I_c^e,I_c^i)^T\in\mathbb{R}^8_+ \ | \ N:=\sum\limits_{i=1}^8\mathbf{x}_i\leq N_0 \right\}, $$
where $\mathbb{R}^8_+$ denotes the non-negative orthant of $\mathbb{R}^8$ and $N_0$ is the initial total population size (of all compartments).  
We note that the system (\ref{M2.1}) is quasi-positive, and thus its solutions remain non-negative when their initial values are nonnegative. Summing the right-hand sides of (\ref{M2.1}), we find that $N'(t)\leq -\gamma\left(I_h(t)+I_u(t)+I_c^e(t)+I_c^i(t) \right) \leq 0$ for a positive constant $\gamma$.  Thus the solutions the system (\ref{M2.1}) remain in 
$\Gamma$ when their initial values are in $\Gamma$.
Notice that for initial susceptible population, $S_0$, $\mathcal E_0:=(S_0,0,0,0,0,0,0,0)^T$ is disease-free equilibrium of system (\ref{M2.1}).

We define a next-generation matrix by considering the linearized system at the disease-free equilibrium, $\mathcal E_0$.  Write the linearized ``infection'' sub-system as $\mathbf{y}'=(F-V)\mathbf{y}$, where $F$ contains entries corresponding to new infections, and $-V$ contains all other transition terms in the Jacobian matrix evaluated at $\mathcal E_0$.   Thus, we consider the following matrices:
$$F=S_0\begin{pmatrix} 0 & 0 & 0 & (1-\phi)\beta & \beta & (1-\phi)\beta_m& (1-\phi)\beta \\ 0 & 0 & 0 & p_e\phi \beta & 0 & p_e^e \phi \beta_m & p_e^i \phi \beta \\ 0 & 0 & 0 & (1-p_e) \phi \beta & 0 & (1-p_e^e) \phi \beta_m & (1-p_e^i) \phi \beta \\ 0 & 0 & 0 & 0 & 0 & 0 & 0 \\ 0 & 0 & 0 & 0 & 0 & 0 & 0 \\ 0 & 0 & 0 & 0 & 0 & 0 & 0 \\ 0 & 0 & 0 & 0 & 0 & 0 &0 \end{pmatrix},$$
$$V=\begin{pmatrix} \frac{1}{\tau} & 0 & 0 & 0 & 0 & 0 & 0 \\ 0 & \frac{1}{\tau} & 0 & 0 & 0 & 0 & 0 \\ 0 & 0 & \frac{1}{\tau} & 0 & 0 & 0 & 0\\ -\rho \frac{1}{\tau} & 0 & 0 & 1/T_h & 0 & 0 & 0 \\-(1-\rho) \frac{1}{\tau} & 0 & 0 & 0 & 1/T_u & 0 & 0 \\ 0 & -\frac{1}{\tau} & 0 & 0 & 0 & 1/T_m & 0 \\ 0 & 0 & -\frac{1}{\tau} & 0 & 0 & 0 & 1/T_i \end{pmatrix}.$$ 
The next-generation matrix describing expected number of new infections (by the different types of infectious cases) is then defined as $K:=FV^{-1}$ \cite{VanDen}.  The effective reproduction number, $\mathcal R_e$, is the spectral radius of $K$, $\varrho(K)$:
\begin{align*}
\mathcal R_e=\varrho(K)=\varrho(FV^{-1})
\end{align*} 

\subsection{Calculation of $T_i$ (\ref{Ti})}
For individuals in the $I_c^i$ compartment, the \emph{average infectious period (prior to being tracked)}, $T_i$, can be approximated by considering the expected value of the time after symptom onset when a traced contact is tracked (given that they have progressed past their incubation period upon tracking), $\hat{T}_i$.  The formula for $\hat{T}_i$ in the calculation of $T_i$ (\ref{Ti}) is derived as follows:
{\small{
 \begin{align*}
 \hat{T}_i& =  T_h+\delta -  \mathbb{E}\left( \ \text{length of incubation period } X \ | \ \text{infectious at tracking time} \right) \\
 &= T_h+\delta - \frac{1}{T_h} \int_0^{T_h}  \mathbb{E}\left( \ X \ | \ \text{infected at time } r\in [0,T_h] \ \text{and } X\in [r,T_h+\delta]  \ \right) \, dr \\
&= \frac{1}{T_h} \int_0^{T_h} \left[ T_h+\delta - \left(\int_{r}^{T_h+\delta} \frac{x e^{- (x-r)/\tau}}{\int_0^{T_h-r+\delta} e^{- s/\tau} \, ds} \, dx\right) \right] \, dr
\end{align*}
}}
Here the notation $\mathbb E\left( X | \mathcal A \right)$ denotes the expected value of random variable $X$ given information $\mathcal A$.  Note that the compartment $I_c^i$ includes both individuals that  are tracked when infectious and those that are tracked after infectiousness (removed by recovery, death or reporting independent of tracing).  We also remark that clearly the waiting times corresponding to infectious period independent of tracing (with mean approximated by $\bar{T}$) and contact tracking of infectious (with mean approximated by $\hat{T}_i$) are independent.  Thus it is reasonable to treat the corresponding processes as \emph{independent exponentially distributed}, and therefore to consider the sum of the rates in formula (\ref{Ti}).  Also, note that $\hat{T}_i$ can more accurately be calculated by also integrating over the index case infectious period probability distribution as with the calculation of $p_e$, however to simplify, we assume a fixed infectious period $T_h$ for the index reported case. 
         %M=F*inv(V);
        %r=max(eig(M));

\section{Formulae for $\mathcal R_e$ in special cases} \label{ReAppendix}

\noindent For the special cases of the model considered in Section \ref{RuleOfThumb}, consider the scenario in which traced individuals are always tracked during their incubation phase, i.e. $p_e=1$.  Then, we can reduce the general system (\ref{M2.1}) to the following system:
\begin{align*}
S' \ \ & = \ \ -\beta S(I_h+I_u)-\beta_m S I_c \\
E'\ \ &= \ \ \beta SI_u+ \left(1-\phi \right) \beta S I_h + \left(1-\phi \right) \beta_m S I_c - \frac{1}{\tau} E \\
I_h'  \ \ &= \ \ \frac{\rho}{\tau} E -\frac{1}{T_h}I_h \\
I_u' \ \ &= \ \ \frac{1-\rho}{\tau} E -\frac{1}{T_u} I_u \\
(E_c)'\ \ &= \ \ \phi \beta S I_h +\phi\beta_m S I_c - \frac{1}{\tau} E_c \\
(I_c)' \ \ &= \ \ \frac{1}{\tau} E_c -\frac{1}{T_m}I_c  \label{M2.1reduced} \tag{A2} \\
\end{align*}
Here $E_c(t)$ and $I_c(t)$ are the exposed individuals who are contact traced (necessarily in their incubation phase since $p_e=1$) and infectious contact traced (monitored) individuals, respectively.  

First, consider the scenario of ``perfect tracing'', i.e. the case of perfect monitoring ($\beta_m=0$).  In this case, since $\beta_m=0$, the traced individuals do not transmit the infection, thus $E_c$ and $I_c$ compartments are decoupled from the first four equations in system (\ref{M2.1reduced}).  Either by inserting $\beta_m=0$ and $p_e=1$ into the $F$ matrix in the next-generation decomposition for the general model or calculating $\mathcal R_e$ in the reduced model (\ref{M2.1reduced}) with $\beta_m=0$, we find:
\begin{align*}
\mathcal R_e= \mathcal R_0\left(\rho\frac{T_h}{T_u}(1-\phi) + 1-\rho \right) \label{Rereduced} \tag{A3},
\end{align*}
where $\mathcal R_0=\beta S_0 T_u$.

Consider $\frac{1}{T_m}\int_{t-a}^t{I_c(s)ds}$ and $\frac{1}{T_h}\int_{t-a}^t{I_h(s)ds}$, the cumulative number of reported contact traced cases and reported cases independent from contact tracing, respectively, from time $t-a$ to $t.$  Define the fraction of reported cases that are traced contact (from time $t-a$ to $t$):
\begin{align*}
\tilde{q}_a(t)=\frac{\frac{1}{T_m}\int_{t-a}^t{I_c(s)ds}}{\frac{1}{T_m}\int_{t-a}^t{I_c(s)ds}+\frac{1}{T_h}\int_{t-a}^t{I_h(s)ds}}.
\end{align*}
 In addition, define the following constant $q$:
 \begin{align*}
q=\frac{\phi}{\phi+(1-\phi)\rho+(1-\rho)\frac{T_u}{T_h}}.
\end{align*}
As mentioned in Section \ref{RuleOfThumb}, $q$ can be interpreted as the conditional probability that a transmission is traced given that this case is reported.   Indeed, the probability that a case (transmission) is reported can be expressed as the sum of the probabilities of the three mutually exclusive events that the transmission was caused by a traced reported case, untraced reported case, and unreported case.  These probabilities are the summands in the denominator (multiplied by $\rho$) of $q$.
\begin{proposition} For sufficiently small $a$ and sufficiently large time $t,$
\begin{align*}
\tilde{q}_a(t) \approx q.
\end{align*}
\end{proposition}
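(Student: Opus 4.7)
The plan is to reduce $\tilde{q}_a(t)$ to a ratio of pointwise compartment sizes, and then use the asymptotic structure of the linear system to pin that ratio to $q$. First, since $I_c$ and $I_h$ are continuous, for sufficiently small $a$ the mean value theorem gives $\int_{t-a}^t I_c(s)\,ds = a\,I_c(t) + o(a)$ and similarly for $I_h$, so
\begin{align*}
\tilde{q}_a(t)\ \approx\ \frac{I_c(t)/T_m}{I_c(t)/T_m + I_h(t)/T_h}.
\end{align*}
Hence it suffices to show that the right-hand side approaches $q$ for sufficiently large $t$.

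Next I would exploit the fact that while $S(t)\approx S_0$ (i.e.\ before appreciable susceptible depletion), the subsystem (A2) in the variables $(E,E_c,I_h,I_u,I_c)$ is essentially linear around the disease-free equilibrium $\mathcal E_0$, and its Jacobian is cooperative/quasi-positive. By the Perron--Frobenius theory for such matrices, every orbit starting in the positive cone aligns, as $t$ grows, with the dominant eigenvector $v^*$ associated with the leading eigenvalue $\lambda^*$. Consequently each ratio $I_c(t)/I_h(t)$, $I_u(t)/I_h(t)$, $E(t)/I_h(t)$, $E_c(t)/I_h(t)$ converges to the corresponding component ratio of $v^*$.

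The final step is to identify $v^*_{I_c}/v^*_{I_h}$ and show it produces $q$. Writing the eigenvector equations from (A2) at growth rate $\lambda^*$ gives $E = \tau(\lambda^*+1/T_h)I_h/\rho$, $I_u = [(1-\rho)(\lambda^*+1/T_h)]/[\rho(\lambda^*+1/T_u)]\cdot I_h$, and $E_c = \tau(\lambda^*+1/T_m)I_c$, together with the two balance equations for $E$ and $E_c$. The cleanest case is $\beta_m=0$ (perfect monitoring), where the $E_c$-balance decouples $I_c$ from further transmission and one may substitute the above identities into $\rho E/\tau = I_h/T_h$ to get
\begin{align*}
\frac{v^*_{I_c}/T_m}{v^*_{I_c}/T_m + v^*_{I_h}/T_h}\ =\ \frac{\phi}{\phi+(1-\phi)\rho+(1-\rho)T_u/T_h}\ =\ q,
\end{align*}
after the $\lambda^*$-dependent factors cancel against the corresponding factor in $I_u/I_h$. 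For general $\beta_m>0$ the identity holds only approximately; the residual is controlled by a $\beta_m$-proportional correction that is small whenever the outbreak is near threshold, which is consistent with the rapid numerical convergence illustrated in Figure~\ref{Qconv}.

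The main obstacle I anticipate is the algebraic reduction in the last step: because $\lambda^*$ enters the eigenvector components of $I_c,I_h,I_u,E,E_c$ nontrivially but must drop out of the simple $\lambda^*$-free formula for $q$, one has to check the cancellation carefully. Handling $\beta_m>0$ also requires a separate bound on the $O(\beta_m)$ correction, which is precisely what turns the statement into an ``$\approx$'' rather than an equality.
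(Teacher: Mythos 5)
Your route is genuinely different from the paper's — the paper never passes to pointwise prevalences or eigenvectors; it integrates the ODEs of the reduced system over the window $[t-a,t]$, rewrites $\frac{1}{T_m}\int I_c$ and $\frac{1}{T_h}\int I_h$ as windowed cumulative traced transmissions $\int\phi\beta S I_h$ and untraced reported transmissions $\rho\int\beta S\left(I_u+(1-\phi)I_h\right)$ plus boundary difference terms, trades $\rho\int I_u$ for $(1-\rho)\frac{T_u}{T_h}\int I_h$ via a second integral identity, and then discards the boundary terms for large $t$. In that bookkeeping the $\phi$'s and $\rho$'s appear directly and no growth-rate algebra is needed. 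Your opening reduction via the mean value theorem and the Perron--Frobenius alignment step are both fine in themselves.

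The genuine gap is the final step: the $\lambda^*$-dependent factors do \emph{not} cancel, and the identity your proof rests on is false unless $\lambda^*=0$. Take the cleanest case $\rho=1$, $\beta_m=0$. The eigenvector equations give $(\lambda^*+1/\tau)E_c=\phi\beta S_0 I_h$ and $(\lambda^*+1/\tau)E=(1-\phi)\beta S_0 I_h$, hence $E_c/E=\phi/(1-\phi)$, and then $I_c=\frac{E_c/\tau}{\lambda^*+1/T_m}$, $I_h=\frac{E/\tau}{\lambda^*+1/T_h}$, so
\[
\frac{v^*_{I_c}/T_m}{v^*_{I_h}/T_h}=\frac{\phi}{1-\phi}\cdot\frac{\lambda^* T_h+1}{\lambda^* T_m+1},
\]
which equals $q/(1-q)=\phi/(1-\phi)$ only when $\lambda^*=0$ or $T_h=T_m$; the same failure occurs for $\rho<1$ through the factor $(\lambda^*+1/T_h)/(\lambda^*+1/T_u)$ that refuses to become $T_u/T_h$. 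The slip enters precisely where you substitute $\rho E/\tau=I_h/T_h$: that is the $\lambda^*=0$ version of the eigenvector relation $\rho E/\tau=(\lambda^*+1/T_h)I_h$, so you have silently set the growth rate to zero. As written, your argument establishes the claim only at threshold; during exponential growth or decay the pointwise flux ratio you compute is biased away from $q$ by these reporting-delay factors. (A secondary issue: for ``sufficiently large $t$'' in an outbreak model $S(t)\to S_\infty<S_0$, so the relevant linearization is at the final state with $\lambda^*<0$, not at $\mathcal E_0$; this does not change $q$ but does change $\lambda^*$ and hence the non-cancelling factors.) The paper's cumulative-count manipulation isolates exactly these effects into explicit boundary terms $\psi,\psi_C,\theta$ that it then argues are negligible — still an approximation, but one that does not require the false cancellation. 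To salvage your approach you would need to either restrict to the near-threshold regime $|\lambda^*|\approx 0$ or quantify the $(\lambda^* T+1)$-type corrections as part of the ``$\approx$''.
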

\begin{proof}
Note that
\begin{align*}
\tilde{q}_a(t) \ &=\ \frac{\frac{1}{\tau}\int_{t-a}^t{E_c(s)ds}-I_c(t)+I_c(t-a)}{\frac{1}{\tau}\int_{t-a}^t{E_c(s)ds}-I_c(t)+I_c(t-a)+\frac{\rho}{\tau}\int_{t-a}^t{E(s)ds}-I_h(t)+I_h(t-a)}\\
    \ &= \ \frac{\int_{t-a}^t{\phi\beta S(s)I_h(s) ds}+\psi_C(t)}{\int_{t-a}^t{\phi\beta S(s)I_h(s) ds}+\psi_C(t)+\rho\int_{t-a}^t{(\beta S(s)I_u(s)+(1-\phi)\beta S(s) I_h(s))ds}+\psi(t)}\\
		\ &= \ \frac{\phi \int_{t-a}^t{\beta S(s)I_h(s) ds}+\psi_C(t)}{(\phi+\rho(1-\phi))\int_{t-a}^t{\beta S(s)I_h(s)ds}+\rho\int_{t-a}^t{\beta S(s)I_u(s)ds+\psi(t)+\psi_C(t)}},\\
\end{align*}
where $\psi(t)= I_h(t-a)-I_h(t)+E(t-a)-E(t-a)$ and $\psi_C(t)= I_c(t-a)-I_c(t)+E_c(t-a)-E_c(t-a)$.\\
Moreover
\begin{align*}
\int_{t-a}^t{(\rho I'_u-(1-\rho)I'_h)ds}=\int_{t-a}^t{(\frac{1-\rho}{T_h}I_h-\frac{\rho}{T_u}I_u)ds}.
\end{align*}
Then we obtain the following equality:
\begin{align*}
T_u\rho(I_u(t)-I_u(t-a))+(1-\rho)(I_h(t-a)-I_h(t))=\int_{t-a}^t{(1-\rho)\frac{T_u}{T_h}I_h(s)ds}-\int_{t-a}^t{\rho I_u(s)ds}.
\end{align*}
Let define $\theta(t)=T_u\rho(I_u(t)-I_u(t-a))+(1-\rho)(I_h(t-a)-I_h(t)).$\\
For small enough $a$, we can assume $S(t)$ is constant over the interval $[t-a,t].$ Then
\begin{align*}
\tilde{q}(t)=\frac{\beta S \phi \int_{t-a}^t{I_h(s)ds}+\psi_C(t)}{\beta S (\phi+\rho(1-\phi) )\int_{t-a}^t{I_h(s)ds}+\beta S(1-\rho)\frac{T_u}{T_h}\int_{t-a}^t{I_h(s)ds}+\psi(t)+\psi_C(t)-\theta(t)}
\end{align*}
As in other outbreak models, it can be shown that the infection components converge to zero as $t\rightarrow \infty$.  Thus $\psi(t), \psi_C(t),\theta(t) \approx 0,$ when $t$ is sufficiently large. Then for sufficiently large $t,$ we obtain
\begin{align*}
\tilde{q}_a(t) \approx \frac{\phi}{\phi+(1-\phi)\rho+(1-\rho)\frac{T_u}{T_h}}=q.
\end{align*}
\end{proof}

Thus inserting $q$ and $k= \beta S_0 T_h \phi$ into (\ref{Rereduced}),
 we find the following formula for the effective reproduction number $\mathcal R_e$:
 \begin{align*}
\mathcal R_e&=\beta\left(\rho T_h(1-\phi) + (1-\rho)T_u \right)  \\
&= \frac{k}{\phi} \left( \rho(1-\phi) +1-\rho \right) \\
 &= k\left( \frac{1-q}{q} \right).
 \end{align*}
 
 Next, consider the case of imperfect tracing (monitoring), $\beta_m>0$, but perfect reporting, $\rho=1$.  As in the previous case, $\mathcal R_e$ reduces to a simple formula:
  \begin{align*}
\mathcal R_e=(1-\phi) \mathcal R +\phi \mathcal R_m,
\end{align*}
where  $\mathcal R:=\beta S_0 T_h$ and $\mathcal R_m=\beta_m S_0 T_m$.  In this case, it is not hard to see that $q=\phi$ and we can similarly prove that $\tilde{q}_a(t)\approx q$ for large enough $t$ and small enough $a$.  Then, with $k_m=\mathcal R_m \phi$, we have 
 \begin{align*}
\mathcal R_e
 &= k\left( \frac{1-q}{q} \right) +k_m.
 \end{align*}

\section{Base SEIR models} \label{A2}
Base model with single infectious compartment:
\begin{align*}
S'(t)&=-\beta S(t) I(t) \\
E'(t)&=\beta S(t) I(t) - \frac{1}{\tau} E(t) \\
I'(t)&=\frac{1}{\tau} E(t) -\frac{1}{T} I(t) \label{B2}\tag{B1} \\
H'(t)&=\frac{\rho}{T} I(t) 
 \end{align*}
 
 \begin{proposition}
 The two base models (\ref{B1}) and (\ref{B2}) are equivalent if $T_h=T_u=T$ and $I_h(0)=\rho I(0), I_u(0)=(1-\rho)I(0)$.  In addition, the reproduction number of (\ref{B2}) is $\mathcal R_0=\beta T$, thus the reproduction number of models (\ref{B1}) and (\ref{B2}) are equal in general when $T=\rho T_h + (1-\rho) T_u$.
 \end{proposition}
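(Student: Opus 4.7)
\section*{Proof proposal for the proposition}

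\textbf{Plan.} The statement has two parts: an equivalence of the state trajectories of (\ref{B1}) and (\ref{B2}) under the stated assumptions, and an equality of reproduction numbers. I would attack the equivalence by showing that the \emph{ratio} $I_h/I$ is preserved in time (equal to $\rho$), so that adding the two infectious compartments of (\ref{B1}) recovers the single compartment of (\ref{B2}). For the reproduction numbers I would simply apply the next-generation construction to (\ref{B2}) and compare with formula (\ref{baseRe}).

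\textbf{Step 1: collapse the infectious compartments.} Assume $T_h=T_u=T$ and define $I(t):=I_h(t)+I_u(t)$. Adding the $I_h'$ and $I_u'$ equations in (\ref{B1}) yields
\begin{align*}
I'(t) \;=\; \tfrac{1}{\tau}E(t) - \tfrac{1}{T}I(t),
\end{align*}
which is exactly the $I$-equation of (\ref{B2}). The $S$- and $E$-equations of (\ref{B1}) involve the sum $I_h+I_u=I$ and therefore coincide with those of (\ref{B2}) without any further argument.

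\textbf{Step 2: verify the hospitalized-case equation.} The only remaining equation to match is $H'=\tfrac{1}{T_h}I_h$ in (\ref{B1}) versus $H'=\tfrac{\rho}{T}I$ in (\ref{B2}). For these to agree I need the identity $I_h(t)=\rho I(t)$ for all $t\ge 0$. Introducing $\Delta(t):=I_h(t)-\rho I(t)$ and using the $I_h'$ and $I'$ equations derived above, one finds
\begin{align*}
\Delta'(t) \;=\; \tfrac{\rho}{\tau}E(t) - \tfrac{1}{T}I_h(t) - \rho\!\left(\tfrac{1}{\tau}E(t) - \tfrac{1}{T}I(t)\right) \;=\; -\tfrac{1}{T}\Delta(t).
\end{align*}
Hence $\Delta(t)=\Delta(0)e^{-t/T}$, and the hypothesis $I_h(0)=\rho I(0)$ forces $\Delta\equiv 0$. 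The same argument applied to $I_u-(1-\rho)I$ takes care of the unreported compartment, so the trajectories coincide.

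\textbf{Step 3: reproduction numbers.} For (\ref{B2}), linearizing at the disease-free equilibrium $(S_0,0,0)$ and splitting the Jacobian into a new-infection matrix and a transition matrix in the usual way gives
\begin{align*}
F=\begin{pmatrix}0 & \beta S_0\\ 0 & 0\end{pmatrix},\qquad
V=\begin{pmatrix}1/\tau & 0\\ -1/\tau & 1/T\end{pmatrix},
\end{align*}
so $\varrho(FV^{-1})=\beta S_0 T$, which is $\mathcal R_0=\beta T$ in the paper's convention of absorbing $S_0$ into $\beta$. Comparing with formula (\ref{baseRe}), namely $\mathcal R_e=\beta S_0(\rho T_h+(1-\rho)T_u)$, equality of the two reproduction numbers is immediate when $T=\rho T_h+(1-\rho)T_u$.

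\textbf{Main obstacle.} There is no deep obstacle; the only nontrivial point is recognizing that one must prove $I_h(t)=\rho I(t)$ dynamically rather than assume it, and this follows from a one-line linear ODE for the defect $\Delta$. Everything else is bookkeeping.
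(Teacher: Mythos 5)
Your proposal is correct and follows essentially the same route as the paper: collapse $I_h+I_u$ into a single compartment satisfying the $I$-equation of (\ref{B2}), then establish $I_h(t)=\rho I(t)$ via a linear ODE uniqueness argument (your defect $\Delta=I_h-\rho I$ satisfying $\Delta'=-\Delta/T$ is the same observation as the paper's remark that $\rho\hat I$ solves the same initial value problem as $I_h$), and finally read off $\mathcal R_0=\beta S_0 T$ from the next-generation matrix. The only difference is that you spell out the next-generation computation that the paper dismisses as ``immediate,'' which is harmless.
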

 \begin{proof}
 For the solution to (\ref{B1}), define $\hat{I}(t)=I_h(t)+I_u(t)$.  Then $\hat{I}'(t)=\frac{1}{\tau} E(t) -\frac{1}{T} \hat{I}(t) $ and $\hat{I}(0)=I_h(0)+I_u(0)=I(0)$.  Noting that the compartment $H$ is decoupled in both systems, we can conclude that $(S(t),E(t),\hat{I}(t))=\left(S(t),E(t),I(t)\right)$, where $I(t)$ is the infectious solution to (\ref{B2}).  Notice that $(\rho\hat{I})'=\frac{\rho}{\tau} E-\frac{1}{T}(\rho \hat{I})$ and $(\rho\hat{I})(0)=I_h(0)$.  Thus $I_h(t)=\rho \hat{I}(t)=\rho I(t)$.  Therefore, we also find that $H'(t)=\frac{1}{T}I_h(t)=\frac{\rho}{T}\hat{I}(t)=\frac{\rho}{T}I(t)$.  This shows models (\ref{B1}) and (\ref{B2}) are equivalent if $T_h=T_u=T$ and $I_h(0)=\rho I(0), I_u(0)=(1-\rho)I(0)$.  The second statement is immediate.
 \end{proof}
 Note that numerical simulations show that the final size of the epidemics match for solutions to models (\ref{B1}) and (\ref{B2}) when the reproduction numbers are equal, i.e. $T=\rho T_h + (1-\rho) T_u$ and $\beta$ is the same.  In addition, numerics show that if $T_h\approx T_u$, then the transient behavior of solutions of the two models ($I(t)$ and $I_h(t)+I_u(t)$ are almost identical, however as $|T_h-T_u|$ increases, the transient dynamics become more distinct even though the final sizes will match.

\section{Additional Sensitivity Analysis Figures} \label{AddFigs}
\FloatBarrier
     
  \begin{figure}[h]
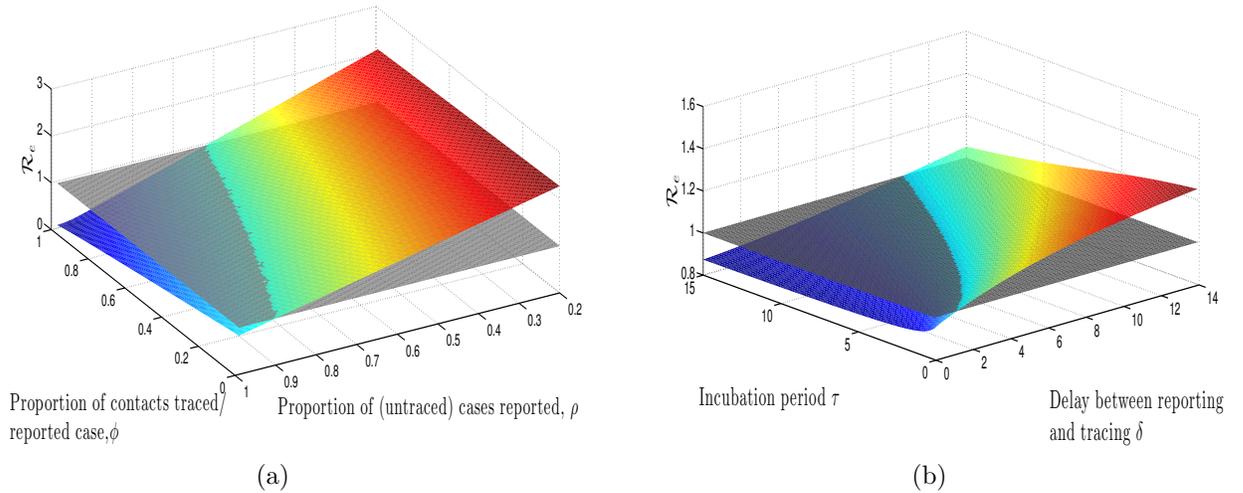
       
\subfigure[][]{ \label{RhoPhi} \includegraphics[width=.5\textwidth,height=.35\textwidth]{RhoPhi.pdf}}
\subfigure[][]{ \label{DeltaTau} \includegraphics[width=.5\textwidth,height=.35\textwidth]{DeltaTau.pdf}}
 \caption{ Surface of effective reproduction number, $\mathcal R_e$, for general model (\ref{M2.1}) when varying (a) $\rho$ and $\phi$, (b) $\delta$ and $\tau$.  The parameters are as in Figure \ref{contour} with $T_h=3$ and $\phi=\rho=0.75$ in (b).  }
 \label{AddSensFig}
 \end{figure}    
 
  \begin{figure}[h]
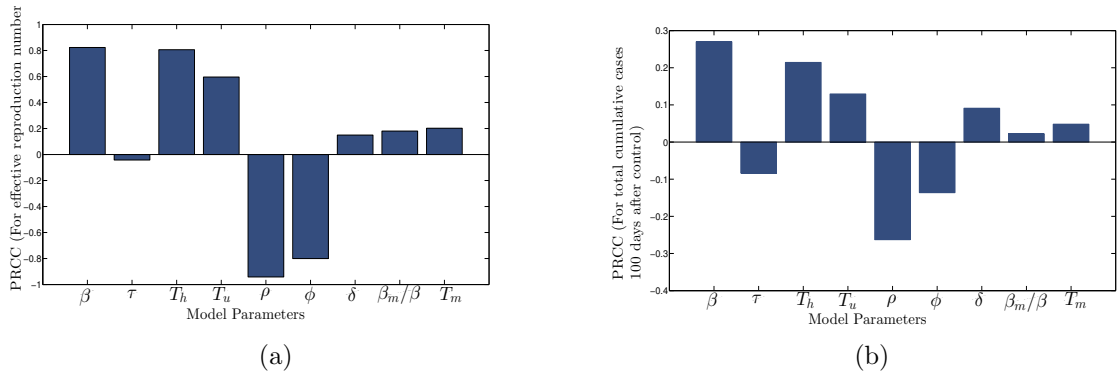
       
 \subfigure[][]{ \includegraphics[width=.45\textwidth,height=.25\textwidth]{ReSensNewAll.pdf}}
\subfigure[][]{ \includegraphics[width=.45\textwidth,height=.25\textwidth]{CumSensNewAll.pdf}}
 \caption{ (a)   Global sensitivity analysis of $\mathcal R_e$ with respect to parameters using Latin Hypercube Sampling (LHS) and Partial-Rank Correlation Coefficients (PRCC). (a)   Sensitivity analysis of cumulative total cases with respect to parameters using LHS and PRCC.}
 \label{ReSens}
 \end{figure}  
\FloatBarrier

\bibliography{EbolaReferences}
\bibliographystyle{plain}

\end{document}